\begin{document}

\title{On the Relative Completeness of Satisfaction-based Probabilistic Hoare Logic With While Loop}

\author{Xin Sun }
\email{xin.sun.logic@gmail.com}
 
\affiliation{%
  \institution{  Zhejiang Lab}
  \city{Hangzhou}
  \state{Zhejiang}
  \country{China}
}

\author{Xingchi  Su }
\authornote{Corresponding author}
\email{x.su1993@gmail.com}
 
\affiliation{%
  \institution{ Zhejiang Lab}
  \city{Hangzhou}
  \state{Zhejiang}
  \country{China}
}

\author{Xiaoning Bian }
\email{bian@zhejianglab.com}
\affiliation{%
  \institution{  Zhejiang Lab}
  \city{Hangzhou}
  \state{Zhejiang}
  \country{China}
}

\author{Anran Cui}
\email{52265902013@stu.ecnu.edu.cn}

\affiliation{%
  \institution{East China Normal University}
  \city{Shanghai}
  \country{China}
}


\renewcommand{\shortauthors}{Not Given}

\begin{abstract}

Probabilistic Hoare logic (PHL) is an extension of Hoare logic and is specifically useful in verifying randomized programs. It allows researchers to formally reason about the behavior of programs with stochastic elements, ensuring the desired probabilistic properties are upheld. The relative completeness of satisfaction-based PHL has been an open problem ever since the birth of the first PHL in 1979. More specifically, no satisfaction-based PHL with While-loop has been proven to be relatively complete yet. This paper solves this problem by establishing a new PHL with While-loop and prove  its   relative completeness. The  programming language concerned in our PHL is expressively equivalent to the existing PHL systems but brings a lot of convenience in showing completeness. The weakest preterm for While-loop command reveals how it changes the probabilistic properties of computer states, considering both execution branches that halt and infinite runs. We prove the relative completeness of our PHL in two steps. We first establish a semantics and proof system of Hoare triples with probabilistic programs and deterministic assertions. Then, by utilizing the weakest precondition of deterministic assertions, we construct the weakest preterm calculus of probabilistic expressions. The relative completeness of our PHL is then obtained as a consequence of the weakest preterm calculus.
\end{abstract}

\begin{CCSXML}
<ccs2012>
 <concept>
  <concept_id>00000000.0000000.0000000</concept_id>
  <concept_desc>Do Not Use This Code, Generate the Correct Terms for Your Paper</concept_desc>
  <concept_significance>500</concept_significance>
 </concept>
 <concept>
  <concept_id>00000000.00000000.00000000</concept_id>
  <concept_desc>Do Not Use This Code, Generate the Correct Terms for Your Paper</concept_desc>
  <concept_significance>300</concept_significance>
 </concept>
 <concept>
  <concept_id>00000000.00000000.00000000</concept_id>
  <concept_desc>Do Not Use This Code, Generate the Correct Terms for Your Paper</concept_desc>
  <concept_significance>100</concept_significance>
 </concept>
 <concept>
  <concept_id>00000000.00000000.00000000</concept_id>
  <concept_desc>Do Not Use This Code, Generate the Correct Terms for Your Paper</concept_desc>
  <concept_significance>100</concept_significance>
 </concept>
</ccs2012>
\end{CCSXML}


\keywords{Hoare logic, Probabilistic program, Relative completeness, Formal verification, Weakest precondition}


\maketitle

\section{Introduction}\label{sec.intro}

\textbf{Hoare Logic.} Hoare logic provides a formalization with logical rules on reasoning about the correctness of programs. It was originally designed by C. A. R. Hoare in 1969 in his seminal paper \cite{hoare1969axiomatic} which was in turn extended by himself in \cite{hoare1971procedures}. The underpinning idea captures the precondition and postcondition of executing a certain program. The precondition describes the property that the command relies on as a start. The postcondition describes the property that the command must lead to after each correct execution. Hoare logic has become one of the most influential tools in the formal verification of programs in the past decades. It has been successfully applied in analysis of deterministic \cite{hoare1969axiomatic,hoare1971procedures,Winskel93}, nondeterministic \cite{Dijkstra75,Dijkstra76,Apt84}, recursive \cite{Hoare71,FoleyH71,AptBO09}, probabilistic \cite{Ramshaw79,Hartog02,chadha2007reasoning,rand2015vphl}  and quantum  programs \cite{Ying11,LiuZWYLLYZ19,Unruh19,ZhouYY19,DengF22}. A comprehensive review of Hoare logic is referred to Apt, Boer, and Olderog \cite{apt2009verification,apt2019fifty}.


\noindent \textbf{Probabilistic Hoare Logic.} Probabilistic Hoare logic (PHL) \cite{Ramshaw79,Hartog02,chadha2007reasoning,rand2015vphl} is an extension of Hoare logic. It introduces probabilistic commands to handle programs with randomized behavior, providing tools to derive probabilistic assertions that guarantee a program fulfills its intended behavior with certain probabilities. Nowadays PHL plays important roles in the formal verification of cryptographic algorithm \cite{Hartog05,Hartog08,BartheGB09,BartheGB12,BartheDGKSS13}, machine learning algorithm \cite{SutskeverMDH13,SrivastavaHKSS14} and others systems involving uncertainty. 

Ramshaw \cite{Ramshaw79} developed the first Probabilistic Hoare Logic (PHL) using a truth-functional assertion language, where logic formulas are interpreted as either true or false. This type of PHL is called satisfaction-based PHL within the Hoare logic community. There are two types of formulas in this logic: deterministic formulas and probabilistic formulas. The truth value of deterministic formulas is interpreted on program states, which are functions that map program variables to their values. On the other hand, the truth value of probabilistic formulas is interpreted on the probability distribution of program states. However, Ramshaw's PHL is incomplete and may not be able to prove some simple and valid assertions.

To address this problem, expectation-based PHL was introduced in a series of work~\cite{Kozen85,Jones90,Morgan96, Morgan99}. This approach employs arithmetical assertions instead of truth-functional assertions. In this context, a Hoare triple $\{f\}C \{g\}$ represents that the expected value of the function $g$ after the execution of program $C$ should be at least as high as the expected value of the function $f$ before the execution.
 
\noindent \textbf{Different Probabilistic Commands.} Satisfaction-based PHL was developed further by den Hartog, Vink and Ricardo \cite{Hartog02,Hartog05,Hartog08}. Their PHL captures randomized behaviors by probabilistic choices, where the command $S_1$ is chosen with probability $\rho$ and the command $S_2$ is chosen with probability $1-\rho$, represented as $S_1\oplus_\rho S_2$. They also provide a denotational semantics accordingly and establish the completeness of the proof system without a while-loop. On the other hand, Chadha et al. \cite{chadha2007reasoning} constructed their PHL by incorporating randomness from tossing a biased coin. They showed that their PHL without the while-loop is complete and decidable. Rand and Zdancewic \cite{rand2015vphl} established the randomness of their PHL by also using a biased coin. They formally verified their logic in the Coq proof assistant.

\noindent \textbf{Our Contribution.} While recent work \cite{BatzKKM21} has proved that expectation-based PHL with the While loop is relatively complete, the work to date has not proven the relative completeness of any satisfaction-based PHL with the While loop. This is just the main contribution of this paper. To elaborate:

\begin{enumerate}
    \item We propose a new satisfaction-based PHL in which the randomness is introduced by the command of probabilistic assignment, {\em i.e.}, $X\xleftarrow{\$}\{a_1:k_{1},...,a_n:k_{n}\}$. This construction makes our logic concise in expressing random assignments with respect to discrete distribution, which are commonly seen in areas of cryptography, computer vision, coding theory and biology \cite{gordon2014probabilistic}. For example, in cryptographic algorithms, almost all nonces are chosen from some prepared discrete distributions on integers, rational or real numbers. Similarly, in the phase of parameter setting, a machine learning algorithm would choose parameters from a distribution over floating point numbers {\em w.r.t.} with some accuracy (discrete as well). The probabilistic assignment also brings a lot of convenience to the completeness proof since it can be treated as a probabilistic extension of the normal assignment. It is also expressively equivalent to the existing randomized commands, like probabilistic choices and biased coins. 

    \item We find out the appropriate weakest preterm for probabilistic expressions {\em w.r.t.} While-loop. It shows how While-loop changes the probabilistic properties of computer states, considering both execution branches that halt and infinite runs. As a preview, we prove the relative completeness of our PHL in two steps. We first establish a proof system of Hoare triples with deterministic assertions. Then, by utilizing the weakest precondition of deterministic assertions, we construct the weakest preterm calculus of probabilist expressions. The relative completeness of our PHL is then obtained as an application of the weakest preterm calculus.
\end{enumerate}

The outline of this paper is as follows. We first introduce our PHL with deterministic assertions in Section \ref{Probabilistic Hoare Logic with deterministic assertion}. We define the denotational semantics of deterministic assertions, construct a proof system and show that it is sound and relatively complete. Then Section \ref{Probabilistic Hoare Logic with probabilistic assertion} introduces the proof system for probabilistic assertions based on weakest preconditions and proves that it is relatively complete as well. We conclude this paper with future work in Section \ref{Conclusions and Future Work}.

\section{Probabilistic Hoare Logic with Deterministic Assertion}\label{Probabilistic Hoare Logic with deterministic assertion}

Hoare logic is a formal system that reasons about "Hoare triples" of the form $\{\phi\}C\{\psi\}$. A Hoare triple characterizes the effect of a command $C$ on the states that satisfy the precondition $\phi$, which means that if a program state satisfies $\phi$, it must also satisfy the postcondition $\psi$ after the correct execution of $C$ on the state. These assertions, also known as formulas, are built from deterministic and probabilistic expressions and will be defined in this section and the next. The commands $C$ are based on classical program statements such as assignment, conditional choice, while loop, and so on. This section will focus on the deterministic formulas.


\subsection{Deterministic Expressions and Formulas}
 
Let $\mathbb{PV}= \{X,Y,Z,\ldots\}$ be a set of program variables denoted by capital letters. Let $\mathbb{LV}= \{x,y,z,\ldots\}$ be a set of logical variables. We assume $\mathbb{LV}$ and $\mathbb{PV}$ are disjoint. Program variables are those variables that may occur in programs. They constitute deterministic expressions. 
Deterministic expressions are classified into arithmetic expression $E$ and Boolean expression $B$. The arithmetic expression consists of integer constant $n \in \mathbb{Z}$ and variables from $\mathbb{PV}$. It also involves arithmetic operators between these components. The arithmetic operator set is defined as $\{+,-,\times,...\}\subseteq \mathbb{Z}\times\mathbb{Z}\rightarrow \mathbb{Z}$.  In contrast, logical variables are used only in assertions. 

\begin{definition}[Arithmetic expressions]
     Given a set of program variables $\mathbb{PV}$, we  define the arithmetic expression $E$ as follows:

\begin{center}
    $E:=n\mid X\mid (E\ aop\ E)$.
\end{center}
\end{definition}

This syntax allows an arithmetic expression ($E$) to be either an integer constant ($n$), a program variable ($X$), or a composition of two arithmetic expressions ($E \ aop \ E$) built by an arithmetic operation ($aop$). They intuitively represent integers in programs. 

The Boolean constant set is $\mathbb{B}=\{\top,\bot\}$. We define relational operators ($rop$) to be performed on arithmetic expressions including $\{>,<,\geq,=,\leq,...\}\subseteq\mathbb{Z}\times\mathbb{Z}\rightarrow \mathbb{B}$. And logical operators, e.g., $\wedge,\vee,\neg,\rightarrow,...$, can be applied to any Boolean expressions.

\begin{definition}[Boolean expressions]
     
The Boolean expression is defined as follows:

 \begin{center}
     $B:=\top\mid \bot\mid (E\ rop\ E)\mid \neg B\mid (B\ lop\ B).$
 \end{center}
 
\end{definition}
 
A Boolean expression represents some truth value, true or false. The expression $(E\ rop\ E)$ represents that the truth value is determined by the binary relation $rop$ between two integers.

The semantics of deterministic expressions is defined on deterministic states $S$ which are denoted as mappings $S: \mathbb{PV}\rightarrow\mathbb{Z}$. Let $ \mathbb{S}$ be the set of all deterministic states. Each state $S \in  \mathbb{S}$ is a description of the value of every program variable. Accordingly, the semantics of arithmetic expressions is $[\![E]\!]:\ \mathbb{S}\rightarrow\mathbb{Z}$ which maps each deterministic state to an integer. Analogously, the semantics of Boolean expressions is $[\![B]\!]:\ \mathbb{S}\rightarrow\mathbb{B}$ which maps each state to a Boolean value. 

\begin{definition}[Semantics of deterministic expressions]\label{semanticsdeterministicexpression}

The semantics of deterministic expressions are defined inductively as follows:

\begin{center}
    \begin{tabular}{rll}
      $[\![X]\!]S$ & = & $S(X)$ \\
      $[\![n]\!]S$ & = & $n$ \\
      $[\![E_{1}\ aop\ E_{2}]\!]S$ & = & $[\![E_{1}]\!]S\ aop\ [\![E_{2}]\!]S$\\
      $[\![\top]\!]S$ & = & $\top$\\
      $[\![\bot]\!]S$ & = & $\bot$\\
      $[\![E_{1}\ rop\ E_{2}]\!]S$ & = & $[\![E_{1}]\!]S\ rop\ [\![E_{2}]\!]S$\\
      $[\![\neg B]\!]S$ & = & $\neg[\![B]\!]S$\\
      $[\![B_{1}\ lop\ B_{2}]\!]S$ & = & $[\![B_{1}]\!]S\ lop\ [\![B_{2}]\!]S$
    \end{tabular}
\end{center}
\end{definition}

As mentioned above, the interpretation of an arithmetic expression is an integer. A program variable $X$ on a deterministic state is interpreted as its value on the state. A constant is always itself over any state. An arithmetic expression $E_1\ aop\ E_2$ is mapped to the integer calculated by the operator $aop$ applied on the interpretation of $E_1$ and the interpretation of $E_2$ on the state. The Boolean expressions can be understood similarly. For example, let $S$ be a state such that $S(X)=1$ and $S(Y)=2$. Then $[\![X\ +\ 1]\!]S=2$ and $[\![(X+2\leq 3) \wedge\ (X+Y=3)  ]\!]S=\top$. 

Next we  define deterministic formulas based on deterministic expressions. 

\begin{definition}[Syntax of deterministic formulas]\label{def.syntaxdeterministicformulas}
The deterministic formulas are defined by the following BNF:
\begin{center}
$\phi:= \top\mid \bot\mid (e\ rop\ e)\mid \neg\phi\mid (\phi\ lop\ \phi)\mid \forall x\phi$
\end{center}
where $e$ represents arithmetic expression build on $\mathbb{LV} \cup \mathbb{PV}$:

\begin{center}
    $e:=n\mid X\mid x\mid (e\ aop\ e)$.
\end{center}

\end{definition}

We restricts $lop$ to the classical operators: $\neg$ and $\wedge$. $\vee$ and $\to$ can be expressed in the standard way. The formula $\forall x\phi$ applies universal quantifier  to the logical variable $x$ in formula $\phi$. 

An interpretation $I : \mathbb{LV} \mapsto \mathbb{Z}$ is a function which maps logical variables to integers. Given an interpretation $I$ and a deterministic state $S$, the semantics of $e$ is defined as follows.

\begin{center}
    \begin{tabular}{rll}
     $[\![n]\!]^IS$ & = & $n$ \\
      $[\![X]\!]^IS$ & = & $S(X)$ \\
      $[\![x]\!]^IS$ & = & $I(x)$ \\
      $[\![E_{1}\ aop\ E_{2}]\!]^IS$ & = & $[\![E_{1}]\!]^IS\ aop\ [\![E_{2}]\!]^IS$\\
    \end{tabular}
\end{center}

The semantics of a deterministic formula is denoted by $[\![\phi]\!]^I=\{S\mid S\models^I\phi\}$ which represents the set of all states satisfying $\phi$. 

\begin{definition}[Semantics of deterministic formulas]\label{def.semanticsdeterministicformula}
The semantics of deterministic formulas is defined inductively as follows:

\begin{center}
\begin{tabular}{p{.1\textwidth}p{.02\textwidth}p{.3\textwidth}}
$[\![\top]\!]^I$ & = & $\mathbb{S}$\\
$[\![\bot]\!]^I$ & =& $\emptyset$\\
$[\![e_{1}\ rop\ e_{2}]\!]^I$ & = & $\{S\in\mathbb{S}\mid [\![e_{1}]\!]^I S\ rop\ [\![e_{2}]\!]^I S=\top\}$\\
$[\![\neg\phi]\!]^I$ & = & $\mathbb{S}\backslash[\![\phi]\!]^I$\\
$[\![\phi_{1}\wedge\phi_{2}]\!]^I$ & = & $[\![\phi_{1}]\!]^I\cap[\![\phi_{2}]\!]^I$\\
$[\![\forall x\phi]\!]$ &  $=$ & $\{ S\mid$ for all integer $n$ and $I'=I[x \mapsto n]$, $S\models^{I'} \phi\}$
        \end{tabular}
    \end{center}
\end{definition}

The $[\![\top]\!]^I$ defaults to all deterministic states $\mathbb{S}$, while $[\![\bot]\!]^I$ is interpreted as the empty set. The symbol $\backslash$ denotes complement, and $[\![\neg\phi]\!]^I$ represents the set of remaining states in $\mathbb{S}$ after removing all states satisfying $\phi$. The logical operations $\wedge$ and $\vee$ between formulas can be interpreted as intersection and union operation of state sets which satisfy corresponding formulas, respectively. And the formula $\forall x \phi$ is satisfied on a deterministic state with interpretation $I$ if and only if $\phi$ is true with respect to all interpretations $I'$ which assigns the same values to every variable as $I$ except $x$. 

For example, let $S$  be a state such that $S(X)=1$ and let $I(x)=3$.  The deterministic formula $\forall x((x>0)\to (x+X>X))$ is satisfied on $S$, i.e. $S\models^I \forall x(x>0\to x+X>X)$. It is also valid (satisfied on arbitrary state and interpretation).

\subsection{Commands}

Commands are actions that we perform on program states. They change a deterministic state to a probabilistic distribution of deterministic states.	We introduce the probabilistic assignment command to capture the randomized executions of probabilistic programs.

\begin{definition}[Syntax of command expressions]\label{syntaxcommandexpression}
    The commands are defined inductively as follows: 
   
\begin{center}
    $C:=\texttt{skip}\mid X\leftarrow E\mid X\xleftarrow[]{\$}R\mid C_{1};C_{2}\mid \texttt{if}\ B\ \texttt{then}\ C_{1}\ \texttt{else}\ C_{2}  \mid \texttt{while} \mbox{ } B \mbox{ } \texttt{do} \mbox{ } C $
\end{center}

\noindent
where $R=\{a_1: k_1,\cdots,a_n: k_n\}$ in which  $\{ k_1,\cdots, k_n\}$ is a set of integers and $a_1, \ldots , a_n$ are real numbers such that $0\leq a_i \leq 1$ and $a_1+ \ldots+a_n =1$. We omit those $a_i$s when they are all equal to $\frac{1}{n}$. $B$ is deterministic formula. 
\end{definition}

The command $\texttt{skip}$ represents a null command doing nothing. $X\leftarrow E$ is the deterministic assignment. $C_1;C_2$ is the sequential composition of $C_1$ and $C_2$ as usual. The last two expressions are the conditional choice and loop, respectively. $X\xleftarrow[]{\$}R$ can be read as a value $k_i$ is chosen with probability $a_i$ and is assigned to $X$. The probabilistic assignment is the way to introduce randomness in this paper. It is worth noting that the language we use for command expressions is just as expressive as the languages that are constructed by using biased coins or probabilistic choices. This can be easily understood through an example: a probabilistic choice $C_1\oplus_{\frac{1}{3}} C_2$ is equivalent to the following program in our language (assuming that $X$ is a new program variable): 

$$X\xleftarrow[]{\$}\{\frac{1}{3}:0,\frac{2}{3}:1\}; \texttt{if}\ (X=0)\ \texttt{then}\ C_1\ \texttt{else}\ C_2$$

 
The semantics of commands is defined on probabilistic states. It shows how different commands \textit{update} probabilistic states.  A probabilistic state, denoted by $\mu$, is a probability sub-distribution on deterministic states, {\em i.e.}, $\mu\in D(\mathbb{S})$. Thus, each $\mu: \mathbb{S}\to [0,1]$ requires that $\Sigma_{S\in\mathbb{S}} \mu(S)\leq 1$. We use sub-distributions to take into account the situations where some programs may never terminate in certain states. For a deterministic state $S\in\mathbb{S}$, $\mu_{S}$ is a special probabilistic state that assigns the value of 1 to $S$ and the value of 0 to any other state. We call it the probabilistic form of a deterministic state. A deterministic state $S$ is considered to be a support of $\mu$ if $\mu(S)>0$. The set of all supports of $\mu$ is denoted by $sp(\mu)$.


\begin{definition}[Semantics of command expressions]\label{def.semanticsofcommand}
The semantics of commands is a function $[\![C]\!]\in  D(\mathbb{S})\rightarrow D(\mathbb{S})$. It is defined inductively as follows:

\begin{itemize}

\item  $[\![  \texttt{skip}  ]\!](\mu)=\mu$

\item $[\![X\leftarrow E]\!](\mu)=    \displaystyle\sum\limits_{S \in \mathbb{S}}  \mu(S)\cdot  \mu_{S[X\mapsto[\![E]\!]S]}$

\item $[\![X\xleftarrow{\$}\{a_1:k_{1},...,a_n:k_{n}\}]\!](\mu)=  \displaystyle\sum_{i=1}^{n} a_i [\![X\leftarrow k_{i}]\!](\mu)$ 

\item $[\![C_{1};\ C_{2}]\!](\mu)=[\![C_{2}]\!]([\![C_{1}]\!](\mu))$

\item   $[\![\texttt{if}\ B\ \texttt{then}\ C_{1}\ \texttt{else}\ C_{2}]\!](\mu)=[\![C_{1}]\!](\downarrow_{B}(\mu))+[\![C_{2}]\!](\downarrow_{\neg B}(\mu))$

\item $[\![   \texttt{while} \mbox{ } B \mbox{ } \texttt{do} \mbox{ } C  ]\!] (\mu ) =  \displaystyle\sum_{i = 0}^{\infty} \downarrow_{\neg B}  (  (  [\![   C   ]\!] \circ \downarrow_{ B})^i (\mu ) ) $

\end{itemize}

\end{definition}


The command $\texttt{skip}$ changes nothing. We write $S[X\mapsto[\![E]\!]S]$ to denote the state which assigns variables the same values as $S$ except that the variable $X$ is assigned the value $[\![E]\!]S$.  Here $\downarrow_{B}(\mu)$ denotes the distribution $\mu$ restricted to those states where $B$ is true. Formally, $\downarrow_{B}(\mu)=v$ with $v(S)=\mu(S)$ if $[\![B]\!]S=\top$ and $v(S)=0$ otherwise. We can write $[\![C]\!]S$ to denote $[\![C]\!]\mu_S$ if the initial state is deterministic. 

In general, if $\mu = [\![   C  ]\!] S $,  $S' \in sp (\mu)$ and $\mu (S') = a$. Then it means that executing command $C$ from state $S$ will terminate on state $S'$ with probability $a$. 


\begin{example}
    Let $R=\{\frac{1}{2} :0, \frac{1}{2} :1\}$ and let $S$ be a deterministic state such that $S(X) = 1$. If we run the command $X\xleftarrow{\$}R$ on  $S$, then distribution $[\![X\xleftarrow{\$}\{\frac{1}{2} :0, \frac{1}{2} :1\} ]\!]\mu_{S}=\frac{1}{2}(\mu_{S[X\mapsto 0]})+\frac{1}{2}(\mu_{S[X\mapsto 1]})$ is obtained.
\end{example}

\begin{example} Let $\textbf{0}$  be the probabilistic state that maps every deterministic state to 0.
    For any probabilistic state $\mu $,     $$[\![   \texttt{while} \mbox{ } \top \mbox{ } \texttt{do} \mbox{ } \texttt{skip}  ]\!] (\mu  ) = \textbf{ 0}.$$  This is because $$[\![   \texttt{while} \mbox{ } \top \mbox{ } \texttt{do} \mbox{ } \texttt{skip}  ]\!] (\mu  ) = $$ 
     $$     \displaystyle\sum_{i = 0}^{\infty} \downarrow_{\neg ( \top )}  (  (  [\![     \texttt{skip}   ]\!] \circ \downarrow_{ \top })^i (\mu ) )  = $$ 
     $$  \downarrow_{\neg (\top)} (\mu) +    \downarrow_{\neg ( \top)}  (  (  [\![     \texttt{skip}   ]\!]  \circ \downarrow_{  \top })  (\mu ) )  +  \downarrow_{\neg ( \top )}  (  (  [\![     \texttt{skip}   ]\!] \circ \downarrow_{ \top })^2 (\mu ) ) + \ldots  $$ It's easy to see that $ \downarrow_{\neg ( \top )} (\mu) =  \textbf{ 0}$ and $  (  [\![     \texttt{skip}   ]\!] \circ \downarrow_{ \top })^k (\mu )  = \mu$ for all $k$. Therefore, $[\![   \texttt{while} \mbox{ } \top  \mbox{ } \texttt{do} \mbox{ } \texttt{skip}  ]\!] (\mu  ) = \textbf{ 0}$. The statement implies that certain programs that never terminate result in probabilistic states \textbf{0}.
    
\end{example}

\begin{example}   Assume that there are two  variables $X,Y$ and infinitely many states $S_0 , S_1, \ldots$ where   $S_0(X)= 0, S_0 (Y) = 0$,  $S_i(X) =1, S_i (Y) =i$ for all $i>0$.   
   Consider the command    $ C:=   \texttt{while} \mbox{ } X=0 \mbox{ } \texttt{do} \mbox{ }   ( X \leftarrow \{\frac{1}{2} :0, \frac{1}{2} :1\}  ; Y\leftarrow Y+1 )  $. If we let   $\mu=[\![   C  ]\!] (S_0)  $, then $\mu(S_0) =0$ and $\mu(S_i) = \frac{1}{2^i}$ for all $i>0$.

\end{example}




Definition~\ref{def.semanticsdeterministicformula} gives the semantics of deterministic formulas over deterministic states. A deterministic formula describes some property of deterministic states. But how to evaluate a deterministic formula on probabilistic states? The semantics is given as follows:  

\begin{center}
    $\mu\models^I \phi$   iff   for each support $S$ of $\mu$, $S\in [\![\phi]\!]^I$.
\end{center}
 
\noindent We call it possibility semantics because the definition intuitively means that a formula $\phi$ is true on a probabilistic state if and only if $\phi$ is true on all possible deterministic states indicated by the probabilistic state. That implies that all supports of a distribution share a common property. Hence we can claim that the distribution satisfies the formula. The possibility semantics makes our PHL with deterministic formula (PHL$_d$)  essentially equivalent to Dijkstra's non-deterministic Hoare logic \cite{Dijkstra75}. However, the former serves as a better intermediate step towards PHL with probabilistic formulas than the latter. Therefore, we will still present  PHL$_d$ in detail, especially the weakest precondtion calculus of PHL$_d$, which is not concretely introduced in non-deterministic Hoare logic.

\subsection{Proof System with deterministic assertions}


A proof system for PHL is comprised of Hoare triples. A Hoare triple, written as  $\{\phi\}C\{\psi\}$, is considered valid if, for every deterministic state that satisfies $\phi$, executing command C results in a probabilistic state that satisfies $\psi$. Formally,  

\begin{center}
$\models \{\phi\}C\{\psi\}  $ if for all interpretation $I$ and deterministic state $S$, if $S\models^I \phi$, then $[\![C]\!](\mu_{S}) \models^I \psi$.
\end{center}


We now build a proof system for PHL$_d$ for the derivation of Hoare triples with probabilistic commands and deterministic assertions. Most rules in our proof system are standard, and they are inherited from Hoare logic or natural deduction \cite{apt2019fifty}. Only one new rule for probabilistic assignment is added, along with some structural rules.  The symbol $\phi[X/E]$ represents the formula which replaces every occurrence of $X$ in $\phi$ with $E$.

\begin{definition}[Proof system of PHL$_d$]
    The proof system of PHL$_d$ consists of the following inference rules:

    \begin{center}
        \begin{tabular}{ll}
         $SKIP:$ & $\frac{}{\vdash\{\phi\}\texttt{skip}\{\phi\}}$\\
         $AS:$ & $ \frac{}{\vdash \{   \phi[X/ E]   \}  X \leftarrow E\{ \phi  \} }$\\

         $PAS:$ & $   \frac{}{\vdash \{   \phi[X/k_1] \wedge \ldots \wedge  \phi[X/k_n]   \}  X\xleftarrow{\$}\{a_1: k_{1},...,a_n: k_{n} \}   \{ \phi  \} }$\\  
         \\
         $SEQ:$ & $ \frac{\vdash\{\phi\}C_{1}\{\phi_{1}\}\quad \vdash\{\phi_{1}\}C_{2}\{\phi_{2}\}} {\vdash\{\phi\}C_{1};C_{2}\{\phi_{2}\}} $\\
        \\
          $IF:$ & $ \frac{\vdash\{\phi\wedge B\}C_{1}\{\psi\}\quad \vdash\{\phi\wedge\neg B\}C_{2}\{\psi\}} {\vdash\{\phi\} \texttt{if}\ B\ \texttt{then}\ C_{1}\ \texttt{else}\ C_{2}\{\psi \} } $\\
          \\
          $CONS:$ & $ \frac{\models\phi'\rightarrow\phi\quad  \vdash\{\phi\}C\{\psi\}\quad \models\psi\rightarrow\psi'}  {\vdash\{\phi'\}C\{\psi'\}}$\\
          \\
          $AND:$ & $ \frac{\vdash\{\phi_{1}\}C\{\psi_{1}\}\quad \vdash\{\phi_{2}\}C\{\psi_{2}\}} {\vdash\{\phi_{1}\wedge\phi_{2}\}C\{\psi_{1}\wedge\psi_{2}\}}$\\
          \\
          $OR:$ & $ \frac{\vdash\{\phi_{1}\}C\{\psi_1\}\quad \vdash\{\phi_{2}\}C\{\psi_2\}} {\vdash\{\phi_{1}\vee\phi_{2}\}C\{\psi_1 \vee \psi_2\}}$\\
          \\
           $WHILE:$ & $\frac{\vdash \{\phi  \wedge B\}   C   \{\phi\}    } {\vdash\{\phi  \}    \texttt{while} \mbox{ }B \mbox{ } \texttt{do} \mbox{ }C   \{\phi \wedge \neg B    \}}$
        \end{tabular}
    \end{center}
\end{definition}

The majority of the above inference rules are easy to comprehend. $(CONS)$ is special since it involves semantically valid implications in the premise part. It characterizes the monotonicity of Hoare triples, which means that a stronger precondition must also lead to the same postcondition or some weaker one. In rule $(WHILE)$, formula $\phi$ is called loop invariant which will not be changed by command $C$. In the remaining part of this section, we prove the soundness and completeness of PHL$_d$. Most of the proofs are similar to their analogue in classical Hoare logic, the confident readers may feel free to skip them.



\begin{theorem}[Soundness]\label{th.soundnessphl}
 For all deterministic formula $\phi$ and $\psi$ and command $C$, $\vdash\{\phi\}C\{\psi\}$ implies $\models\{\phi\}C\{\psi\}$. 
\end{theorem}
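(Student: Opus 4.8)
The plan is to prove soundness by induction on the structure of the derivation of $\vdash\{\phi\}C\{\psi\}$, i.e. on the last inference rule applied. For each rule we assume (inductive hypothesis) that the premises are semantically valid and show the conclusion is. The base cases are the axioms $(SKIP)$, $(AS)$, and $(PAS)$; the inductive cases are $(SEQ)$, $(IF)$, $(CONS)$, $(AND)$, $(OR)$, and $(WHILE)$. Throughout I would fix an arbitrary interpretation $I$ and a deterministic state $S$ with $S\models^I\phi$, and unfold the semantics of the relevant command via Definition~\ref{def.semanticsofcommand} together with the possibility semantics $\mu\models^I\psi$ iff every support of $\mu$ lies in $[\![\psi]\!]^I$.

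For the base cases: $(SKIP)$ is immediate since $[\![\texttt{skip}]\!](\mu_S)=\mu_S$. For $(AS)$, I would use the standard substitution lemma relating $[\![\phi[X/E]]\!]^I$ and $[\![\phi]\!]^I$ under the update $S[X\mapsto[\![E]\!]S]$: if $S\models^I\phi[X/E]$ then $S[X\mapsto[\![E]\!]S]\models^I\phi$, and the only support of $[\![X\leftarrow E]\!](\mu_S)$ is exactly that updated state. For $(PAS)$, observe that $[\![X\xleftarrow{\$}\{a_1{:}k_1,\ldots,a_n{:}k_n\}]\!](\mu_S)=\sum_{i=1}^n a_i\,\mu_{S[X\mapsto k_i]}$, so its supports form a subset of $\{S[X\mapsto k_i]\mid 1\le i\le n\}$; from $S\models^I\phi[X/k_1]\wedge\cdots\wedge\phi[X/k_n]$ and the substitution lemma applied to each constant $k_i$, every such $S[X\mapsto k_i]$ satisfies $\phi$, hence every support does.

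For the inductive cases: $(CONS)$ follows directly from the definitions of $\models\phi'\to\phi$ and $\models\psi\to\psi'$; $(SEQ)$ chains the two hypotheses through $[\![C_1;C_2]\!](\mu_S)=[\![C_2]\!]([\![C_1]\!](\mu_S))$, noting one must pass from ``$\mu$ satisfies $\phi_1$'' to the conclusion for $C_2$ by applying the hypothesis at each support of $[\![C_1]\!](\mu_S)$ — since every support of the output of $C_2$ arises from running $C_2$ on a support of the intermediate distribution; $(AND)$ and $(OR)$ are read off from the set-theoretic meaning of $\wedge$ and $\vee$ and the fact that the supports of $[\![C]\!](\mu_S)$ are the same regardless of which postcondition we track. $(IF)$ uses $[\![\texttt{if }B\texttt{ then }C_1\texttt{ else }C_2]\!](\mu_S)=[\![C_1]\!](\downarrow_B(\mu_S))+[\![C_2]\!](\downarrow_{\neg B}(\mu_S))$: exactly one of $\downarrow_B(\mu_S),\downarrow_{\neg B}(\mu_S)$ equals $\mu_S$ (depending on whether $[\![B]\!]S=\top$) and the other is $\textbf{0}$, so $S$ satisfies either $\phi\wedge B$ or $\phi\wedge\neg B$ and the corresponding hypothesis applies.

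The main obstacle is $(WHILE)$. Here the semantics is the infinite sum $[\![\texttt{while }B\texttt{ do }C]\!](\mu_S)=\sum_{i=0}^\infty \downarrow_{\neg B}((\,[\![C]\!]\circ\downarrow_B)^i(\mu_S))$. The key claim to establish, by induction on $i$, is that for every $i$ the distribution $(\,[\![C]\!]\circ\downarrow_B)^i(\mu_S)$ has all its supports in $[\![\phi]\!]^I$ (using the premise $\vdash\{\phi\wedge B\}C\{\phi\}$, the inductive hypothesis of the outer structural induction, and the fact that $\downarrow_B$ only discards mass and restricts supports to states satisfying $B$, so any support of $\downarrow_B(\nu)$ satisfies $\phi\wedge B$). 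Then each term $\downarrow_{\neg B}((\,[\![C]\!]\circ\downarrow_B)^i(\mu_S))$ has supports in $[\![\phi\wedge\neg B]\!]^I$. The remaining subtlety is that a support of an infinite sum $\sum_i \nu_i$ need not be a support of any single $\nu_i$ in general — but here all the $\nu_i$ have supports contained in the fixed set $[\![\phi\wedge\neg B]\!]^I$, and if $S'$ is a support of the sum then $\sum_i \nu_i(S')>0$, forcing $\nu_j(S')>0$ for some $j$, hence $S'\in[\![\phi\wedge\neg B]\!]^I$. (One should also remark that the sum is a well-defined sub-distribution, which is guaranteed by the setup of Definition~\ref{def.semanticsofcommand}.) This completes the induction and the proof.
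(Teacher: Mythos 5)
Your proposal is correct and follows essentially the same route as the paper: case analysis on the last rule (the paper phrases it as structural induction on $C$, but the argument is the same), with the $(WHILE)$ case handled by an inner induction showing $([\![C]\!]\circ\downarrow_B)^i(\mu_S)\models^I\phi$ for every $i$ and then passing to the infinite sum. Your explicit remark that a support of the infinite sum must already be a support of some finite term is a small point the paper leaves implicit, but it does not change the structure of the argument.
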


\begin{proof}

We prove by structural induction on $C$.  Let $I$ be an arbitrary interpretation.
\begin{itemize}
\item (SKIP) It's trivial to see that that $\models \{\phi\}\texttt{skip}\{\phi\}$.

\item (AS) Assume $ S \models^I \phi[X/ E ] $. This means that $\phi$ is true if the variable $X$ is assigned to the value $[\![E]\!]S$ and all other values are assigned to a value according to $S$. Let $S' = [\![X\leftarrow E]\!](S)=  S[X\mapsto[\![E]\!]S]$. Then $S'$ assigns $X$ to the value $[\![E]\!]S$ and all other variables to the same value as $S$. Therefore, $S' \models^I \phi$.

\item (PAS) Assume $ S \models^I \phi[X/k_1] \wedge \ldots \wedge  \phi[X/k_n] $. This means that $\phi$ is true if the variable $X$ is assigned to the any of $\{k_1, \ldots, k_n\}$ and all other values are assigned to a value according to $S$. Let $\mu' = [\![   X\xleftarrow{\$}\{a_1: k_{1},...,a_n: k_{n}\}     ]\!](\mu_S)$. Then $\mu'$ is a distribution with support $\{S_1, \ldots, S_n\}$, where $S_i =  S[X\mapsto k_i]$ for $i\in \{1, \ldots,n\}$. Since $S_i$ assigns $X$ to the value $k_i$ and all other variables to the same value as $S$. We know that $S_i \models^I \phi$. This means that $\phi$ is true on all supports of $\mu'$. Therefore, $\mu' \models^I \phi$.

\item (SEQ)  If rule (SEQ) is used to derive $\vdash\{\phi\}C_{1};C_{2}\{\phi_{2}\}$ from $\vdash\{\phi\}C_{1}\{\phi_{1}\}$ and $\vdash\{\phi_{1}\}C_{2}\{\phi_{2}\}$, then by induction hypothesis we have $\models\{\phi\}C_{1}\{\phi_{1}\}$ and $\models\{\phi_{1}\}C_{2}\{\phi_{2}\}$. Assume $S\in [\![\phi]\!]^I$. Let $S' \in sp(   [\![C_{1}; C_{2}]\!](\mu_{S})  )$ be an arbitrary state which belongs to the support of $[\![C_{1}; C_{2}]\!](\mu_{S}) $. From $[\![C_{1}; C_{2}]\!](\mu_{S})  =    [\![C_{2}]\!]([\![C_{1}]\!](\mu_{S}))   $ we know that there is a state $S_1 \in sp(   [\![C_{1}]\!](\mu_{S})  )$ such that $S' \in sp (   [\![C_{2}]\!](\mu_{S_1})   )$. Now by  $\models  \{ \phi \} C_{1} \{ \phi_{1} \}$   we know that $ [\![C_{1}]\!](\mu_{S})  \models^I \phi_1$ and  $S_1 \models^I \phi_1$. By $\models \{ \phi_{1} \}C_{2} \{ \phi_{2} \}$ we know that $ [\![C_{2}]\!](\mu_{S_1})  \models^I \phi_2$ and  $S'\models^I \phi_2$.

\item (IF)   Assume $S\in [\![\phi]\!]^I$, $\vdash\{\phi\wedge B\}C_{1}\{\psi\}$ and $\vdash\{\phi\wedge \neg B\}C_{2}\{\psi\}$. By induction hypothesis we know that $ \models  \{\phi\wedge B\}C_{1}\{\psi\}$ and $\models  \{\phi\wedge \neg B\}C_{2}\{\psi\}$. Let $S'$ be an arbitrary state which belongs to $sp([\![\texttt{if}\ B\ \texttt{then}\ C_{1}\ \texttt{else}\ C_{2}]\!](\mu_{S}))$.

Since $S$ is a deterministic state, we know that either $ S \in  [\![B]\!] $ or $ S \in  [\![\neg B]\!] $.

\begin{itemize}

\item If $ S \in  [\![B]\!] $, then  $[\![\texttt{if}\ B\ \texttt{then}\ C_{1}\ \texttt{else}\ C_{2}]\!](\mu_{S})=[\![C_{1}]\!](\downarrow_{B}(\mu_S))  =   [\![C_{1}]\!]( \mu_S)  $. Hence $S'\in  sp ( [\![C_{1}]\!]( \mu_S) )$. From $S\in [\![\phi]\!]^I$ and $ S \in  [\![B]\!] $ we know that $S \models^I  \phi \wedge B$. Now by $ \models  \{\phi\wedge B\}C_{1}\{\psi\}$ we deduce that $[\![C_{1}]\!]( \mu_S) \models^I \psi$. Therefore, $S' \models^I  \psi$.

\item If $ S \in  [\![ \neg B]\!] $, then  $[\![\texttt{if}\ B\ \texttt{then}\ C_{1}\ \texttt{else}\ C_{2}]\!](\mu_{S})=[\![C_{2}]\!](\downarrow_{ \neg B}(\mu_S))  =   [\![C_{2}]\!]( \mu_S)  $. Hence $S'\in sp([\![C_{2}]\!]( \mu_S) )$. From $S\in [\![\phi]\!]^I$ and $ S \in  [\![\neg B]\!] $ we know that $S \models^I  \phi \wedge \neg B$. Now by $ \models  \{\phi\wedge \neg B\}C_{2}\{\psi\}$ we deduce that $[\![C_{2}]\!]( \mu_S) \models^I \psi$. Therefore, $S' \models^I  \psi$.
\end{itemize}

\item (CONS) Assume $\models \phi'\rightarrow \phi$, $\vdash \{\phi\} C \{\psi\} $ and $\models \psi \rightarrow \psi'$. By induction hypothesis we obtain that $\models \{\phi\} C \{\psi\}$. Let $S$ be a state such that $S \models^I  \phi'$. Then by $\models \phi'\rightarrow \phi$ we know that $S \models^I  \phi$. By $\models \{\phi\} C \{\psi\}$ we know that $ [\![C]\!] (S) \models^I \psi $. Hence $S' \models^I \psi$ for all $S'$ which belongs to $sp([\![C]\!] (S))$. Now by  $\models \psi \rightarrow \psi'$ we know that $S' \models^I \psi'$.

\item (AND) Assume $\vdash\{\phi_{1}\}C\{\psi_{1}\} $ and $  \vdash\{\phi_{2}\}C\{\psi_{2}\}$. By induction hypothesis we know that $\models \{\phi_{1}\}C\{\psi_{1}\} $ and $  \models \{\phi_{2}\}C\{\psi_{2}\}$. Let $S$ be a state such that $S \models^I  \phi_1 \wedge \phi_2$.  Let $S'$ be a state in $sp([\![C]\!] (S))$.
Then $S \models^I \phi_1$ and by  $\models \{\phi_{1}\}C\{\psi_{1}\} $ we have $[\![C]\!] (S)  \models^I \psi_1$.  Hence $S' \models^I \psi_1$. Similarly, we can deduce that $S' \models^I \psi_2$. Therefore, $S' \models^I \psi_1 \wedge \psi_2$.

\item (OR) Assume $\vdash\{\phi_{1}\}C\{\psi_{1}\} $ and $  \vdash\{\phi_{2}\}C\{\psi_{2}\}$. By induction hypothesis we know that $\models \{\phi_{1}\}C\{\psi_{1}\} $ and $\models \{\phi_{2}\}C\{\psi_{2}\}$. Let $S$ be a state such that $S \models^I  \phi_1 \vee \phi_2$.  Let $S'$ be a state in $sp([\![C]\!] (S))$.
Then either  $S \models^I \phi_1$ or $S  \models^I  \phi_2$. If $S \models^I \phi_1$, then by $\models \{\phi_{1}\}C\{\psi_{1}\} $ we have $[\![C]\!] (S)  \models^I \psi_1$.  Hence $S' \models^I \psi_1$ and $S' \models^I \psi_1 \vee \psi_2$.  If $S \models^I \phi_2$, then by $\models \{\phi_{2}\}C\{\psi_{2}\} $ we have $[\![C]\!] (S)  \models^I \psi_2$.  Hence $S' \models^I \psi_2$ and $S' \models^I \psi_1 \vee \psi_2$.  
Therefore, it holds that $S' \models^I \psi_1 \vee \psi_2$ no matter $S \models^I \phi_1$ or $S \models^I \phi_2$.

\item (WHILE) Assume $\models \{\phi  \wedge B\}   C   \{\phi\} $. Let $S$ be an arbitrary state and $I$ be an arbitrary interpretation such that $S\models^I \phi$. We remind the readers that $ [\![   \texttt{while} \mbox{ } B \mbox{ } \texttt{do} \mbox{ } C  ]\!] (\mu_S ) =  \displaystyle\sum_{i = 0}^{\infty} \downarrow_{\neg B}  (  (  [\![   C   ]\!] \circ \downarrow_{ B})^i (\mu_S ) )$. We claim that for each natural number $k$, it holds that  $     (  [\![   C   ]\!] \circ \downarrow_{ B})^k (\mu_S ) \models^I \phi  $.

We prove by induction on $k$:

\begin{itemize}
\item If $k=0$, then $([\![C]\!] \circ \downarrow_{ B})^k (\mu_S )  =        \mu_S   $.  Hence $  (  [\![   C   ]\!] \circ \downarrow_{ B})^k (\mu_S )  \models^I  \phi$.

\item Assume it holds that $     (  [\![   C   ]\!] \circ \downarrow_{ B})^k (\mu_S ) \models^I \phi  $.
Then $     (  [\![   C   ]\!] \circ \downarrow_{ B})^{k+1} (\mu_S )  =      [\![   C   ]\!] \circ \downarrow_{ B}  ( (  [\![   C   ]\!] \circ \downarrow_{ B})^{k} (\mu_S )   )$. For arbitrary $S' \in sp (   (  [\![   C   ]\!] \circ \downarrow_{ B})^k (\mu_S )   )$, we have $S' \models^I \phi$. 
\begin{itemize}
    \item If $S' \models \neg B$, then $sp([\![   C   ]\!] \circ \downarrow_{ B} ( \mu_{S'})) =\emptyset$. Then $[\![   C   ]\!] \circ \downarrow_{ B} ( \mu_{S'}) \models^I \phi$ vacuously.
    \item If $S' \models  B$, then $[\![   C   ]\!] \circ \downarrow_{ B} ( \mu_{S'})= [\![   C   ]\!]   ( \mu_{S'})$. Then by $\models \{\phi  \wedge B\}   C   \{\phi\} $ we know $[\![   C   ]\!]  ( \mu_{S'}) \models^I \phi$. 
\end{itemize}
Therefore, we know $[\![   C   ]\!] \circ \downarrow_{ B} ( \mu_{S'}) \models^I \phi$ for all  $S' \in sp (   (  [\![   C   ]\!] \circ \downarrow_{ B})^k (\mu_S )   )$. This implies that $ (  [\![   C   ]\!] \circ \downarrow_{ B})^{k+1} (\mu_S ) \models^I \phi$.
\end{itemize}

With the above claim we infer that $\downarrow_{\neg B}  (  (  [\![   C   ]\!] \circ \downarrow_{ B})^i (\mu_S ) ) \models^I \phi \wedge \neg B$. 
	
Hence $\displaystyle\sum_{i = 0}^{\infty} \downarrow_{\neg B}  (  (  [\![   C   ]\!] \circ \downarrow_{ B})^i (\mu_S ) ) \models^I \phi  \wedge \neg B$. This proves $  [\![   \texttt{while} \mbox{ } B \mbox{ } \texttt{do} \mbox{ } C  ]\!] (\mu_S ) \models^I  \phi  \wedge \neg B $.   
\end{itemize}    

\end{proof}

\vspace{-5mm}

\subsection{Weakest precondition of deterministic assertions}

To give the completeness of PHL$_d$, we assume that the set of program variables $\mathbb{PV}$ is finite. Under this assumption, every deterministic state can be characterized by a unique formula, respectively. For example,  $S=\{X\mapsto 1, Y\mapsto 2, Z\mapsto 3\}$ is characterized by the formula $(X=1\wedge Y=2\wedge Z=3)$ when $\mathbb{PV}=\{X,Y,Z\}$. 

The completeness is given in the light of the method of weakest preconditions. Let $wp(C,\psi)$ denote the weakest precondition of a command $C$ and a postcondition $\psi$. Intuitively, $[\![ wp(C,\psi) ] \!]$ is the largest set of states starting from which if a program $C$ is executed, the resulting states satisfy a given post-condition  $\psi$. $[\![ wp(C,\psi) ] \!]$ is a precondition in the sense that $\models \{wp(C,\psi) \} C \{ \psi\}$ and it is weakest since for each formula $\phi$, if $\models \{\phi\}C\{\psi\}$, then $\models \phi \rightarrow wp(C,\psi)$. In the strict sense, the weakest precondition is a semantic notion. That is, $[\![ wp(C,\psi)] \!]$ is the weakest precondition of command $C$ and  postcondition $\psi$, while $ wp(C,\psi)$ is a syntactic representative of the weakest precondition. In this article, however, we will treat weakest precondition as a syntactic notion when this treatment is more convenient for our constructions and proofs.

The proof strategy can be performed step by step:

\begin{itemize}
    \item Step 1: Find out the corresponding precondition for each command;
    \item Step 2: Prove that $\{   wp(C, \phi) \}   C\{\phi\}$ is derivable by the proof system;
    \item Step 3: Prove that for each valid Hoare triple $\{\phi\}C\{\psi\}$, it holds that $\phi\to wp(C,\psi)$ is valid;
    \item Step 4: By inference rule $(CONS)$, it is obtained that $\{\phi\}C\{\psi\}$ is derivable by the proof system. 
\end{itemize}

The subtleties in this proof are to find the proper definition of the weakest preconditions with respect to different commands.

\begin{definition}[Weakest precondions]\label{def.weakestprecondition}
The weakest precondition is defined inductively on the structure of commands as follows:

\begin{enumerate}
\item $wp(skip, \phi) = \phi$.
\item  $wp(  X \leftarrow E, \phi )= \phi[X/ E ]$.
\item  $wp( X\xleftarrow{\$}\{a_1: k_{1},...,a_n: k_{n} \}, \phi  ) =   \phi[X/k_1] \wedge \ldots \wedge  \phi[X/k_n]$.
\item $wp( C_1 ; C_2 , \phi) = wp(C_1, wp(C_2, \phi))$.
\item $wp( if\ B\ then\ C_{1}\ else\ C_{2}, \phi   )=  (B \wedge wp(C_1, \phi))  \vee (\neg B \wedge wp(C_2, \phi))$.
\item $wp( while \mbox{ }B \mbox{ } do \mbox{ }C  , \phi)= \bigwedge\limits_{k \geq 0}  \psi_k   $, 
where $\psi_0 = \top$ and $\psi_{i+1} = ( B \wedge wp( C, \psi_{i}))  \vee ( \neg B  \wedge \phi)$.
\end{enumerate}
\end{definition}

Some careful readers may notice that $\bigwedge\limits_{k \geq 0}  \psi_k $, albeit understandable, is not an assertion in our language because it uses infinite conjunction. In fact, just like in classical Hoare logic,  $wp( while \mbox{ }B \mbox{ } do \mbox{ }C  , \phi)$ can be equivalently expressed as an assertion by using G$\ddot{o}$del's $\beta$ predicate. For example, in the special case when there is only one program variable $X$ mentioned in $C$ and $\phi$, we have 

$$ wp( while \mbox{ }B \mbox{ } do \mbox{ }C  , \phi)=$$ 
$$\forall k \forall m, n \geq 0 ( (  \beta^{\pm}(n,m,0,x) \wedge$$ $$ \forall i (0\leq i<k) (\forall x   (\beta^{\pm}(n,m,i,X) \rightarrow B[ X/x])  \wedge  $$  
$$  \forall x,y (  \beta^{\pm}(n,m,i,x) \wedge  \beta^{\pm}(n,m,i+1,y) \rightarrow ( wp(C, X=y) \wedge$$ $$ \neg wp ( C, \bot)  )[X/x]    )      )  ) \rightarrow (  \beta^{\pm}(n,m,k,X) \rightarrow (B \vee  \phi)[X/x]  ) ) $$

\noindent
For more information, readers who are interested may refer to \cite{Winskel93}. Here we stick to the infinite conjunction for ease of proof.

\begin{proposition}\label{precondition}
It holds that $ \vdash \{   wp(C, \phi) \}   C\{\phi\}$.
\end{proposition}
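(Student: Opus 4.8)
The plan is to prove $\vdash \{wp(C,\phi)\} C \{\phi\}$ by structural induction on the command $C$, following exactly the inductive definition of $wp$ from Definition~\ref{def.weakestprecondition}. For the base cases this is essentially immediate: when $C = \texttt{skip}$, the rule $(SKIP)$ gives $\vdash\{\phi\}\texttt{skip}\{\phi\}$ and $wp(\texttt{skip},\phi)=\phi$; when $C = X\leftarrow E$, the rule $(AS)$ gives exactly $\vdash\{\phi[X/E]\}X\leftarrow E\{\phi\}$ and $wp(X\leftarrow E,\phi)=\phi[X/E]$; when $C = X\xleftarrow{\$}\{a_1{:}k_1,\dots,a_n{:}k_n\}$, the rule $(PAS)$ gives exactly $\vdash\{\phi[X/k_1]\wedge\dots\wedge\phi[X/k_n]\}\,C\,\{\phi\}$, matching the definition of $wp$ in this case.

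For the compound cases I would chain the induction hypothesis with the corresponding structural rule, inserting an application of $(CONS)$ where the syntactic shape of $wp$ does not literally match the shape demanded by the rule. For $C = C_1;C_2$: by the induction hypothesis $\vdash\{wp(C_2,\phi)\}C_2\{\phi\}$ and $\vdash\{wp(C_1,wp(C_2,\phi))\}C_1\{wp(C_2,\phi)\}$, so $(SEQ)$ yields $\vdash\{wp(C_1,wp(C_2,\phi))\}C_1;C_2\{\phi\}$, which is the claim since $wp(C_1;C_2,\phi)=wp(C_1,wp(C_2,\phi))$. For $C = \texttt{if}\ B\ \texttt{then}\ C_1\ \texttt{else}\ C_2$: from the induction hypotheses $\vdash\{wp(C_i,\phi)\}C_i\{\phi\}$, I would use $(CONS)$ together with the valid implication $(wp(C,\phi)\wedge B)\rightarrow(B\wedge wp(C_1,\phi))$ — i.e. noting that $(B\wedge wp(C_1,\phi))\vee(\neg B\wedge wp(C_2,\phi))$ conjoined with $B$ semantically entails $wp(C_1,\phi)$, and symmetrically with $\neg B$ — to get $\vdash\{wp(C,\phi)\wedge B\}C_1\{\phi\}$ and $\vdash\{wp(C,\phi)\wedge\neg B\}C_2\{\phi\}$, then apply $(IF)$.

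The main obstacle is the while case $C = \texttt{while}\ B\ \texttt{do}\ C'$, where $wp(C,\phi) = \bigwedge_{k\geq 0}\psi_k$ with $\psi_0=\top$ and $\psi_{i+1}=(B\wedge wp(C',\psi_i))\vee(\neg B\wedge\phi)$. The strategy is to apply the $(WHILE)$ rule with loop invariant $\Psi := \bigwedge_{k\geq 0}\psi_k$: it suffices to show (i) $\vdash\{\Psi\wedge B\}C'\{\Psi\}$, after which $(WHILE)$ gives $\vdash\{\Psi\}C\{\Psi\wedge\neg B\}$, and then (ii) $(CONS)$ with the valid implication $\Psi\wedge\neg B\rightarrow\phi$ finishes the proof. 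For (ii), observe that for each $k$, $\psi_{k+1}\wedge\neg B$ semantically entails $\phi$ by the definition of $\psi_{k+1}$, so $\Psi\wedge\neg B\models\phi$. For (i), the key semantic lemma is that $\Psi\wedge B\models wp(C',\psi_k)$ for every $k\geq 0$: indeed $\Psi$ entails $\psi_{k+1}=(B\wedge wp(C',\psi_k))\vee(\neg B\wedge\phi)$, so $\Psi\wedge B$ entails $wp(C',\psi_k)$. By the induction hypothesis $\vdash\{wp(C',\psi_k)\}C'\{\psi_k\}$ for each $k$, hence by $(CONS)$ $\vdash\{\Psi\wedge B\}C'\{\psi_k\}$ for each $k$; then repeated application of the $(AND)$ rule over all $k$ (formally, one argues that $\vdash\{\Psi\wedge B\}C'\{\bigwedge_{k\geq 0}\psi_k\}$ follows since the premise is the same for every conjunct — this is the one place where using an infinite conjunction in the assertion language requires care, and where the $\beta$-predicate reformulation mentioned after Definition~\ref{def.weakestprecondition} would make the argument fully rigorous within a finitary language) yields $\vdash\{\Psi\wedge B\}C'\{\Psi\}$, completing (i). I expect the bookkeeping around the infinite conjunction — justifying that $(AND)$ can be iterated to conclude a triple with conclusion $\Psi$ — to be the only genuinely delicate point; everything else is a routine match of $wp$ clauses against proof rules.
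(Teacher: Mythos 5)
Your proposal is correct and follows essentially the same route as the paper's own proof: structural induction on $C$, with the base rules matching the $wp$ clauses directly, $(SEQ)$ and $(IF)$ handled via the induction hypothesis plus $(CONS)$, and the while case discharged by taking $\bigwedge_{k\geq 0}\psi_k$ as the loop invariant, using $(CONS)$ and the rule $(WHILE)$, and closing with the entailment $\bigwedge_k\psi_k\wedge\neg B\models\phi$. Your explicit flagging of the infinitary aggregation of $\vdash\{\Psi\wedge B\}C'\{\psi_k\}$ into a single triple is the one point the paper glosses over (it simply asserts the conjunction triple ``by inductive hypothesis''), so your treatment is, if anything, slightly more careful on the same argument.
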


\begin{proof}
We do structural induction on $C$. We only show these non-trivial cases. The cases of (SKIP),(AS) and (PAS) are trivial.

\begin{itemize}

\item (SEQ)  We have $\vdash \{ wp(C_1, wp(C_2, \phi))  \} C_1 \{   wp(C_2, \phi)  \} $ and $ \vdash  \{  wp(C_2, \phi) \} C_2 \{\phi \} $ by inductive hypothesis. Then applying the SEQ rule we know $\vdash \{ wp(C_1, wp(C_2, \phi))  \} C_1 ;C_2 \{    \phi   \} $.

\item (IF) By inductive hypothesis we have $ \{  wp(C_1, \phi) \}C_1 \{\phi \} $ and $ \{  wp(C_2, \phi) \}C_2\{\phi \} $. Then by  (CONS) we know $ \{ B \wedge ( wp(C_1, \phi) \wedge B ) \vee (wp(C_2, \phi) \wedge  \neg B )) \}C_1 \{\phi \} $ and $ \{ \neg  B \wedge ( wp(C_1, \phi) \wedge B ) \vee (wp(C_2, \phi) \wedge  \neg B ))) \}C_2\{\phi \} $. Now by (IF) we have $\vdash  \{( wp(C_1, \phi) \wedge B ) \vee (wp(C_2, \phi) \wedge  \neg B )  \}    if\ B\ then\ C_{1}\ else\ C_{2} \{\phi \} $.

\item (WHILE) It's easy to see that $\models ( B\wedge  \bigwedge\limits_{k \geq 0}  \psi_k  ) \rightarrow (B \wedge  \bigwedge\limits_{k \geq 1}  wp(C, \psi_i)  )   $. By inductive hypothesis, $\vdash  \{\bigwedge\limits_{k \geq 1}  wp(C, \psi_i)\} C \{  \bigwedge\limits_{k \geq 1}    \psi_i \}$. Now by (CONS) we have $\vdash  \{B \wedge  \bigwedge\limits_{k \geq 1}  wp(C, \psi_i)\} C \{  \bigwedge\limits_{k \geq 1}    \psi_i \}$. Note that $\models \bigwedge\limits_{k \geq 1}    \psi_i  \leftrightarrow \bigwedge\limits_{k \geq 0}    \psi_i $. Then by (CONS) we have $\vdash  \{B \wedge  \bigwedge\limits_{k \geq 1}  wp(C, \psi_i)\} C \{  \bigwedge\limits_{k \geq 0}    \psi_i \}$.\\
 That is, $\vdash  \{B \wedge     wp( while \mbox{ }B \mbox{ } do \mbox{ }C  , \phi)  \} C \{  wp( while \mbox{ }B \mbox{ } do \mbox{ }C  , \phi)   \}$.\\
 
 Then, by (WHILE), we know   \\
 
 $\vdash  \{   wp( while \mbox{ }B \mbox{ } do \mbox{ }C  , \phi)  \} while \mbox{ }B \mbox{ } do \mbox{ }C \{  wp( while \mbox{ }B \mbox{ } do \mbox{ }C  , \phi)  \wedge \neg B  \}$.\\
 
 Then by the definition of $wp( while \mbox{ }B \mbox{ } do \mbox{ }C  , \phi)$ we know \\
  $\vdash  \{   wp( while \mbox{ }B \mbox{ } do \mbox{ }C  , \phi)  \} while \mbox{ }B \mbox{ } do \mbox{ }C \{  \phi  \}$.
 
\end{itemize}
\end{proof}

 By soundness and Proposition~\ref{precondition}, we can infer that $\models \{   wp(C, \phi) \}   C\{\phi\}$.  It shows that Definition~\ref{def.weakestprecondition} indeed constructs the precondition given arbitrary postcondition and command. The following proposition states that these $wp(C,\phi)$s are the weakest.

\begin{proposition}\label{weakest precondition}
If $ \models \{\phi\}C\{\psi\}$,  then $\models \phi \rightarrow wp(C, \psi)$.
\end{proposition}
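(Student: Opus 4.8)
The plan is to prove $\models \phi \rightarrow wp(C,\psi)$ by structural induction on the command $C$, following exactly the same inductive skeleton used for soundness in Theorem~\ref{th.soundnessphl} and for Proposition~\ref{precondition}. For the base cases, I would unfold definitions directly. For $\texttt{skip}$, $wp(\texttt{skip},\psi)=\psi$ and $\models\{\phi\}\texttt{skip}\{\psi\}$ means exactly that every state satisfying $\phi$ satisfies $\psi$, so $\models\phi\rightarrow\psi$. For $X\leftarrow E$, if $S\models^I\phi$ then $[\![X\leftarrow E]\!](\mu_S)=\mu_{S'}$ with $S'=S[X\mapsto[\![E]\!]S]$ satisfies $\psi$; a standard substitution lemma (the semantics of $\psi[X/E]$ at $S$ equals the semantics of $\psi$ at $S[X\mapsto[\![E]\!]S]$) then gives $S\models^I\psi[X/E]=wp(X\leftarrow E,\psi)$. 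The probabilistic assignment case is analogous: if $S\models^I\phi$, then $[\![X\xleftarrow{\$}\{a_i:k_i\}]\!](\mu_S)$ has support $\{S[X\mapsto k_i]\}_i$, all of which must satisfy $\psi$, so each $S[X\mapsto k_i]\models^I\psi$, hence by the substitution lemma $S\models^I\phi[X/k_1]\wedge\cdots\wedge\phi[X/k_n]$.

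For the inductive cases, I would proceed as follows. For sequential composition $C_1;C_2$: assume $\models\{\phi\}C_1;C_2\{\psi\}$. First I claim $\models\{\phi\}C_1\{wp(C_2,\psi)\}$ — indeed, if $S\models^I\phi$ and $S_1\in sp([\![C_1]\!](\mu_S))$, then $[\![C_2]\!](\mu_{S_1})$ is ``part of'' $[\![C_1;C_2]\!](\mu_S)=[\![C_2]\!]([\![C_1]\!](\mu_S))$ and hence satisfies $\psi$, so by the (already-established) induction hypothesis for $C_2$ applied to the triple $\{\text{true-at-}S_1\}C_2\{\psi\}$ we get $S_1\models^I wp(C_2,\psi)$; thus $[\![C_1]\!](\mu_S)\models^I wp(C_2,\psi)$. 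Now the induction hypothesis for $C_1$ gives $\models\phi\rightarrow wp(C_1,wp(C_2,\psi))=wp(C_1;C_2,\psi)$. For the conditional: assume $\models\{\phi\}\texttt{if }B\texttt{ then }C_1\texttt{ else }C_2\{\psi\}$, and take $S\models^I\phi$. If $S\models^I B$, then $[\![C_1]\!](\mu_S)=[\![\texttt{if}\ldots]\!](\mu_S)\models^I\psi$, so $\models\{\phi\wedge B\}C_1\{\psi\}$, whence the IH for $C_1$ gives $S\models^I wp(C_1,\psi)$ and so $S\models^I B\wedge wp(C_1,\psi)$; symmetrically in the other case, so $S\models^I wp(\texttt{if}\ldots,\psi)$.

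The while-loop case is the main obstacle. Here $wp(\texttt{while }B\texttt{ do }C,\psi)=\bigwedge_{k\geq 0}\psi_k$ with $\psi_0=\top$, $\psi_{i+1}=(B\wedge wp(C,\psi_i))\vee(\neg B\wedge\psi)$, and I must show $\models\phi\rightarrow\psi_k$ for every $k$ simultaneously. I would prove this by an inner induction on $k$. The base $k=0$ is trivial since $\psi_0=\top$. For the step, suppose $\models\phi\rightarrow\psi_k$ and recall $[\![\texttt{while }B\texttt{ do }C]\!](\mu_S)=\sum_{i=0}^\infty\downarrow_{\neg B}(([\![C]\!]\circ\downarrow_B)^i(\mu_S))$. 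The key sub-claim to extract is: if $S\models^I\phi$ and $S\models^I B$, then $[\![C]\!](\mu_S)\models^I\psi_k$ — this holds because from $S$ with $B$ true, running the loop on $\mu_S$ factors through running $C$ first (i.e.\ $[\![\texttt{while}\ldots]\!](\mu_S)=[\![\texttt{while}\ldots]\!]([\![C]\!](\mu_S))$, using that the $i=0$ term $\downarrow_{\neg B}(\mu_S)=\mathbf{0}$), and by the validity of the original triple plus the semantics of the loop one can show each support state $S'$ of $[\![C]\!](\mu_S)$ satisfies $\phi\rightarrow\psi_k$ type reasoning — more precisely, $\models\{\phi\wedge B\}C\{wp(\texttt{while }B\texttt{ do }C,\psi)\}$, hence $\models\{\phi\wedge B\}C\{\psi_k\}$, hence by the outer IH for $C$, $\models(\phi\wedge B)\rightarrow wp(C,\psi_k)$. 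Combined with the fact that $S\models^I\phi\wedge\neg B$ forces $[\![\texttt{while}\ldots]\!](\mu_S)=\downarrow_{\neg B}(\mu_S)=\mu_S\models^I\psi$, we obtain for any $S\models^I\phi$ that $S\models^I(B\wedge wp(C,\psi_k))\vee(\neg B\wedge\psi)=\psi_{k+1}$. The delicate points here are (i) justifying the ``unrolling'' identity $\models\{\phi\wedge B\}C\{wp(\texttt{while }B\texttt{ do }C,\psi)\}$ carefully from the loop semantics — one must verify that a run of the loop from a $B$-state coincides with first doing $C$ and then the loop, which requires a small computation with the infinite sum — and (ii) making sure the nested induction is set up so that the outer structural IH on $C$ is available when proving the inner statement for each $k$. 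Once $\models\phi\rightarrow\psi_k$ holds for all $k$, we conclude $\models\phi\rightarrow\bigwedge_{k\geq0}\psi_k=wp(\texttt{while }B\texttt{ do }C,\psi)$, completing the induction.
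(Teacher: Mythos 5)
Your non-loop cases (\texttt{skip}, the two assignments, sequencing via characterizing formulas of the intermediate states, and the conditional) are correct and essentially the same as the paper's argument. The while case, however, has a genuine gap. Your inner induction hypothesis is the statement $\models\phi\rightarrow\psi_k$ for the \emph{fixed} precondition $\phi$, and you make the inductive step go through by invoking the claim $\models\{\phi\wedge B\}C\{wp(\texttt{while}\ B\ \texttt{do}\ C,\psi)\}$. That claim is not available to you: to verify it you would have to show that every support $S'$ of $[\![C]\!](\mu_S)$, for $S\models\phi\wedge B$, satisfies $\bigwedge_{k\geq 0}\psi_k$, i.e.\ satisfies $wp(\texttt{while}\ B\ \texttt{do}\ C,\psi)$ --- which is precisely an instance of the while case of the proposition you are in the middle of proving, so using it here is circular. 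Nor can the step be recovered from the inner hypothesis as you state it, because the states $S'$ reached after one execution of $C$ need not satisfy $\phi$, so ``$\models\phi\rightarrow\psi_k$'' says nothing about them.

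The repair is to strengthen the inner induction so that it quantifies over all states rather than over $\phi$-states: for every $k$ and every deterministic state $S$, if $[\![\texttt{while}\ B\ \texttt{do}\ C]\!](\mu_S)\models\psi$ then $S\models\psi_k$. In the step, for $S\models B$ your unrolling identity $[\![\texttt{while}\ B\ \texttt{do}\ C]\!](\mu_S)=[\![\texttt{while}\ B\ \texttt{do}\ C]\!]([\![C]\!](\mu_S))$ together with linearity of the loop semantics shows that each support $S'$ of $[\![C]\!](\mu_S)$ again satisfies the hypothesis $[\![\texttt{while}\ B\ \texttt{do}\ C]\!](\mu_{S'})\models\psi$; the inner induction hypothesis at level $k$ then gives $S'\models\psi_k$, hence $[\![C]\!](\mu_S)\models\psi_k$, and with the characterizing formula $\chi_S$ of $S$ the outer structural induction hypothesis for $C$ yields $S\models wp(C,\psi_k)$, so $S\models\psi_{k+1}$; the case $S\models\neg B$ is as you wrote. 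With this repair your route is sound, and it is in fact a different decomposition from the paper's, which argues by induction on the number of loop iterations and treats terminating and non-terminating executions separately; the fixed version of your argument handles both uniformly through the index $k$ of $\psi_k$.
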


\begin{proof}

\begin{enumerate}
\item Trivial.

\item Assume $ \models\{\phi\}   X \leftarrow E     \{\psi\}$. 

Let $I$ be an arbitrary interpretation. Suppose $\not\models^I  \phi \rightarrow wp(  X \leftarrow E, \psi )$. Then there is a state $S$ such that $S \models^I \phi$ and $S \not\models^I  wp( X \leftarrow E, \psi)$. Then we know $S \not\models^I  \psi[X/ E ]$, which means that $\psi $ is not true under the valuation which assigns $X$ to the value $[\![E]\!]S$ and all other variables to the same value as $S$.  Let $S' = [\![X\leftarrow E]\!](S)=  S[X\mapsto[\![E]\!]S]$. Then $S'$ assigns $X$ to the value $[\![E]\!]S$ and all other variables to the same value as $S$. Then we know $S' \not\models^I \psi$.

By  $\models\{\phi\}   X \leftarrow E     \{\psi\}$ and  $S \models^I \phi$  we know that $[\![X\leftarrow E]\!](S) \models^I \psi$. That is, $S' \models^I \psi$. Contradiction!
 
\item Assume $ \models\{\phi\}   X\xleftarrow{\$}\{a_1: k_{1},...,a_n: k_{n} \}     \{\psi\}$. 
 
Let $I$ be an arbitrary interpretation. Suppose $\not\models^I  \phi \rightarrow wp(  X\xleftarrow{\$}\{a_1: k_{1},...,a_n: k_{n} \} , \psi )$.  Then there is a state $S$ such that $S \models^I \phi$ and $S \not\models^I  wp( X\xleftarrow{\$}\{a_1: k_{1},...,a_n: k_{n} \}  , \psi)$. Then by the construction of weakest precondition we have  $S \not\models^I \psi[X/r_1] \wedge \ldots \wedge  \psi[X/r_n]$.

By  $\models\{\phi\}  X\xleftarrow{\$}\{a_1: k_{1},...,a_n: k_{n} \}     \{\psi\}$ and  $S \models^I \phi$  we know that $[\![  X\xleftarrow{\$}\{a_1: k_{1},...,a_n: k_{n} \} ]\!](S) \models^I \psi$.  Let $\mu' = [\![   X\xleftarrow{\$}\{a_1: k_{1},...,a_n: k_{n}\}     ]\!](\mu_S)$. Then $\mu'$ is a distribution with support $\{S_1, \ldots, S_n\}$, where $S_i =  S[X\mapsto k_i]$ for $i\in \{1, \ldots,n\}$. Hence we know $S_i \models^I \psi $ for all $i \in \{1,\ldots,n\}$.

Note that $S_i$ is a valuation which assigns   $X$ to the value $k_i$ and all other variables to the same value as $S$. Therefore, if $S_i \models^I \psi$ then $S \models^I \psi[X/k_i]$. Then we know $S \models^I \psi[X/k_i]$ for all  $i\in \{1, \ldots,n\}$, which means that $S   \models^I \psi[X/k_1] \wedge \ldots \wedge  \psi[X/k_n]$. Contradiction!

\item Assume $ \models\{\phi\}C_1;C_2\{\psi\}$.
 
Let $I$ be an arbitrary interpretation.  Let $S$ be an arbitrary state such that $S \models^I \phi$. Then we know that $[\![  C_1 ;C_2 ]\!](S) \models^I \psi$. Then  $  [\![  C_2   ]\!]  ( [\![  C_1   ]\!] (S) ) \models^I \psi$. That is, for all $S' \in sp( [\![  C_1   ]\!] (S) )$, $[\![  C_2   ]\!] (S') \models^I \psi$. Let $\phi'$ be the formula which characterizes $S'$, then    $ \models \{\phi' \} C_2 \{ \psi\}$. By induction hypothesis we know $\models \phi' \rightarrow wp( C_2, \psi ) $. Now from $S' \models^I \phi'$ we have $S' \models^I  wp( C_2, \psi )$. From $S' \in sp( [\![  C_1   ]\!] (S) )$ we have $[\![  C_1   ]\!] (S) \models^I wp( C_2, \psi ) $. This proves $\models \{ \phi\} C_1 \{ wp( C_2, \psi ) \}$. Then by  induction hypothesis we know $\models \phi \rightarrow  wp(C_1, wp(C_2, \psi))$.

\item Assume $ \models\{\phi\}  if\ B\ then\ C_{1}\ else\ C_{2}    \{\psi\}$. 

Let $I$ be an arbitrary interpretation.  Let $S$ be an arbitrary state such that $S \models^I \phi$. Then $[\![  if\ B\ then\ C_{1}\ else\ C_{2}   ]\!] (S) \models^I \psi $. It is trivial that either $S\models B$ or $S\models \neg B$. 

\begin{itemize}
\item  
If $S\models B$, then $[\![  if\ B\ then\ C_{1}\ else\ C_{2}   ]\!] (S) = [\![   C_{1}   ]\!] (S) $. Therefore,  $[\![   C_{1}   ]\!] (S) \models^I \psi$.  Let  $\phi'$ be the formula which characterizes $S$, then $S \models^I \phi'$ and  $\models \{\phi'\} C_1 \{\psi \}$. Then by induction hypothesis we know $ \models \phi'  \rightarrow wp( C_1, \psi)$. Therefore, $S \models^I wp( C_1, \psi)$.
\item  
If $S\models \neg B$, then $[\![  if\ B\ then\ C_{1}\ else\ C_{2}   ]\!] (S) = [\![   C_{2}   ]\!] (S) $. Therefore,  $[\![   C_{2}   ]\!] (S) \models^I \psi$.  Let  $\phi'$ be the formula which characterizes $S$, then $S \models^I \phi'$ and  $\models \{\phi'\} C_2 \{\psi \}$. Then by induction hypothesis we know $ \models \phi'  \rightarrow wp( C_2, \psi)$. Therefore, $S \models^I wp( C_2, \psi)$.
\end{itemize}

Sum up the above two cases we have $S \models^I (B\wedge   wp( C_1, \psi) ) \vee (\neg B\wedge   wp( C_2, \psi) )$. This proves $\models \phi \rightarrow  ( (B\wedge   wp( C_1, \psi) ) \vee (\neg B\wedge   wp( C_2, \psi) ) )$.

\item Assume $ \models \{\chi\}    while \mbox{ }B \mbox{ } do \mbox{ }C      \{\phi\}$ and the command will terminate if it starts from all states where $\chi$ is satisfiable. Let $I$ be an interpretation and $S$ be a state such that $S\models^I \chi$. We prove by induction on the times of the iteration of the while loop.

\begin{itemize}

\item 
Suppose the while loop is terminated after  $0$ time of execution.\footnote{The loop terminated after $k$ steps means that the longest branch of the execution of the while loop terminates after $k$ steps. } Then  $   S\not\models B   $ and       $   [\![         while \mbox{ }B \mbox{ } do \mbox{ }C      ]\!]   (\mu_S )  = \mu_S $. Therefore, by  $ \models \{\chi\}    while \mbox{ }B \mbox{ } do \mbox{ }C      \{\phi\}$ we know  $S \models^I \phi$. Then we have $S\models^I \neg B \wedge \phi$. Hence $S \models^I \psi_i$ for all $i$. Therefore, $S \models^I wp( while \mbox{ }B \mbox{ } do \mbox{ }C  , \phi)$.

\item  
Suppose the while loop is terminated after  $1$ time of execution. Then  $   S\models B   $ and $ ([\![   C  ]\!] \circ \downarrow_{ B})(\mu_S ) = [\![   C  ]\!] (\mu_S )$. Therefore, for all $S' \in sp ( [\![   C  ]\!] (\mu_S ) )$,  $   S' \not\models B$, $S'  \models^I  \phi$ and $[\![   C  ]\!] (\mu_S )  \models^I \neg B\wedge \phi$. 
Hence $[\![   C  ]\!] (\mu_S )  \models^I \psi_i$ for all $i$.
Let $\chi'$ be the formula that characterizes $S$. Then we have $\models \{\chi'\} C \{\psi_i\}$. 
Then by induction hypothesis we know $ \models \chi' \rightarrow wp(C, \psi_i)$. Then from   $S\models^I \chi'$ and $S\models B$ we know
 $S \models^I \psi_i$ for all $i$. Therefore, $S \models^I wp( while \mbox{ }B \mbox{ } do \mbox{ }C  , \phi)$.

\item  
Suppose the while loop is terminated after  $k+1$ times of execution. 

Let $S_1 \in sp (   ([\![   C   ]\!] \circ \downarrow_{ B})^k (\mu_S )  )$. 

\begin{itemize}

\item If $S_1 \models \neg B$, then $S_1$ is a state where the loop terminates after $k$ time of execution. Then by the induction hypothesis we know $S_1\models^I \psi_i$ for all $i$.

\item If $S_1 \models  B$, then $ ([\![   C  ]\!] \circ \downarrow_{ B})(\mu_{S_1} ) = [\![   C  ]\!] (\mu_{S_1} )$. Therefore, for all $S' \in sp ( [\![   C  ]\!] (\mu_{S_1} ) )$,  $   S' \not\models B$, $S'  \models^I  \phi$ and $[\![   C  ]\!] (\mu_{S_1} )  \models^I \neg B\wedge \phi$. 
Hence $[\![   C  ]\!] (\mu_{S_1} )  \models^I \psi_i$ for all $i$.
Let $\chi'$ be the formula that characterizes ${S_1}$. Then we have $\models \{\chi'\} C \{\psi_i\}$. 
Then by induction hypothesis we know $ \models \chi' \rightarrow wp(C, \psi_i)$. Then from   ${S_1}\models^I \chi'$ and $S_1\models B$ we know
 ${S_1} \models^I \psi_i$ for all $i$.
\end{itemize}

This proves $ ([\![   C   ]\!] \circ \downarrow_{ B})^k (\mu_S )   \models^I \psi_i $ for all $i$.  

Let $S_2 \in sp(  ([\![   C   ]\!] \circ \downarrow_{ B})^{k-1} (\mu_S )    )$.

Now, we split into two cases: 

\begin{itemize}
\item if $S_2 \models \neg B$, then $S_2\models^I \neg B \wedge \phi$.
\item 
If $S_2 \models B$, then  $([\![   C   ]\!] \circ \downarrow_{ B}) (\mu_{S_2} ) = [\![   C   ]\!]  (\mu_{S_2} )$. Then $ [\![   C   ]\!]  (\mu_{S_2} )   \models^I \psi_i $ for all $i$. 
Let $\chi'$ be the formula that characterizes ${S_2}$. Then we have $\models \{\chi'\} C \{\psi_i\}$. 
Then by induction hypothesis we know $ \models \chi' \rightarrow wp(C, \psi_i)$. Then from   ${S_2}\models^I \chi'$ and $S_2 \models B$ we know
 ${S_2} \models^I \psi_i$ for all $i$.
\end{itemize}

Hence $S_2 \models^I wp(C, \psi_i)$.

This proves $ ([\![   C   ]\!] \circ \downarrow_{ B})^{k-1} (\mu_S    )   \models^I \psi_i $ for all $i$.  

Repeat the above procedure we deduce that  $S \models^I B \wedge wp(C, \psi_i) $ for all $i$.

\item Now we study the case in which  the loop never terminates. That is, there is a infinite sequence of deterministic states $S=S_0, S_1, S_2,\ldots$ in which $S_{j+1} \in sp ([\![ C ]\!](S_j) )$. We will show that $S_j \models \psi_i$ for all $i$ and $j$.

It's easy to see that $S_j \models B$ for all $j$ because otherwise the loop will terminate. Now we  prove that for all $j$, $S_j \models \psi_i$ for all $i$ by induction on $i$.

It's easy to see that $S_j \models \psi_0$ for all $j$. Assume for all $j$ it holds that $S_j \models \psi_k$. Then $S_{j-1} \models wp(C, \psi_k)$. Therefore, $S_{j-1} \models \psi_{k+1}$.

\end{itemize}   
\end{enumerate}

\end{proof}





\begin{corollary}\label{corollary.weakestprecondition}
 $S \models wp(C, \phi)$ iff $[\![  C   ]\!](S) \models \phi$.
\end{corollary}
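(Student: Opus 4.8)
The plan is to obtain the biconditional by simply packaging the two halves that have already been established: Proposition~\ref{precondition} (together with Soundness) gives the left-to-right direction, and Proposition~\ref{weakest precondition} gives the right-to-left direction, with the finiteness of $\mathbb{PV}$ supplying the characteristic formulas needed to bridge semantics and the Hoare-triple formulation.

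For the forward implication, suppose $S \models wp(C,\phi)$. Proposition~\ref{precondition} yields $\vdash \{wp(C,\phi)\}C\{\phi\}$, and Soundness (Theorem~\ref{th.soundnessphl}) upgrades this to $\models \{wp(C,\phi)\}C\{\phi\}$. Unfolding the definition of validity of a Hoare triple at the state $S$ and the ambient interpretation, and using the hypothesis $S \models wp(C,\phi)$, we get $[\![C]\!](\mu_S) \models \phi$, that is, $[\![C]\!](S)\models\phi$.

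For the backward implication, suppose $[\![C]\!](S)\models\phi$. Since $\mathbb{PV}$ is finite, let $\phi_S$ be the characteristic formula of $S$ (for instance $X_1 = S(X_1)\wedge\cdots\wedge X_m = S(X_m)$ when $\mathbb{PV}=\{X_1,\dots,X_m\}$), so that for every deterministic state $S'$ and every interpretation, $S'\models\phi_S$ iff $S'=S$. Then $\models \{\phi_S\}C\{\phi\}$ holds trivially, because $S$ is the only state satisfying the precondition and on it the postcondition holds by hypothesis. Applying Proposition~\ref{weakest precondition} gives $\models \phi_S \to wp(C,\phi)$, and since $S\models\phi_S$ we conclude $S\models wp(C,\phi)$.

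There is essentially no genuine obstacle here; the statement is a corollary in the literal sense. The one point that deserves a moment's care is the while-loop: Proposition~\ref{weakest precondition} was phrased with a termination hypothesis, but its proof in fact also treats the non-terminating branch (establishing $S_j\models\psi_i$ for all $i,j$ along an infinite run), so the implication $\models\phi_S\to wp(C,\phi)$ is available with no extra assumption and the argument above goes through uniformly for every command $C$.
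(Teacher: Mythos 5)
Your proposal is correct and follows essentially the same route as the paper: the forward direction via Proposition~\ref{precondition} plus Soundness, and the backward direction via the characteristic formula of $S$ (available since $\mathbb{PV}$ is finite) combined with Proposition~\ref{weakest precondition}. Your side remark about the termination hypothesis in the while-loop case of Proposition~\ref{weakest precondition} is a fair observation, and as you note the non-terminating branch is handled there as well, so nothing further is needed.
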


\begin{proof} \ \\
\begin{itemize}
\item ($\Rightarrow $)  If $S \models wp(C, \phi)$, then by Proposition~\ref{precondition} and soundness,  we have $\models \{   wp(C, \phi) \}   C\{\phi\}$ which implies that  $[\![  C   ]\!](S) \models \phi$.

\item ($\Leftarrow $) 
  Assume $[\![  C   ]\!](S) \models \phi$. Let $\phi'$ be the formula which characterizes $S$. Then we know  $S \models \phi'$  and $\models  \{  \phi' \} C  \{  \phi   \}$.  Then  we have $\models \phi' \rightarrow wp(C, \psi)$. Hence $S  \models wp(C, \phi)$.
\end{itemize} 
\end{proof}

After combining Proposition~\ref{weakest precondition} and Corollary~\ref{corollary.weakestprecondition}, it is concluded that $wp(C,\phi)$ defined in Definition~\ref{def.weakestprecondition} is the weakest precondition for command $C$ and the postcondition $\phi$. This leads to an easy completeness proof.

\begin{corollary}[Completeness]\label{th.completeness}

If $\models \{\phi \} C \{ \psi\}$, then $\vdash \{\phi \} C \{ \psi\}$.

\end{corollary}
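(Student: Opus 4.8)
The plan is to derive the corollary as an immediate consequence of the weakest-precondition machinery already in place, following verbatim the four-step strategy announced just before Definition~\ref{def.weakestprecondition}. So the proof itself will be short: the real content has been front-loaded into Proposition~\ref{precondition} and Proposition~\ref{weakest precondition}.

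Concretely, assume $\models \{\phi\} C \{\psi\}$. First I would apply Proposition~\ref{weakest precondition} to conclude $\models \phi \rightarrow wp(C,\psi)$. Second, Proposition~\ref{precondition} gives the derivation $\vdash \{wp(C,\psi)\} C \{\psi\}$. Third, since $\models \psi \rightarrow \psi$ holds trivially, all three premises of the consequence rule $(CONS)$ are satisfied with the instantiation $\phi' := \phi$, intermediate precondition $wp(C,\psi)$, and postconditions $\psi$ and $\psi$. Applying $(CONS)$ then yields $\vdash \{\phi\} C \{\psi\}$, which is exactly the claim. Combined with Theorem~\ref{th.soundnessphl}, this also shows $\vdash$ and $\models$ coincide for Hoare triples of PHL$_d$.

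The main obstacle is not in this final assembly step but in the ingredients it quotes, and in particular in the $\texttt{while}$ case. On the side of Proposition~\ref{precondition}, one must check that the loop invariant $\bigwedge_{k\ge 0}\psi_k$ is indeed preserved by $C$ so that rule $(WHILE)$ applies; on the side of Proposition~\ref{weakest precondition}, one must run the induction on the number of loop iterations and separately handle non-terminating runs to see that $\bigwedge_{k\ge 0}\psi_k$ is genuinely the weakest precondition. The only delicate point that still touches the corollary directly is the tacit assumption that $wp(C,\psi)$ is an honest formula of the assertion language: for the while loop the infinite conjunction $\bigwedge_{k\ge 0}\psi_k$ must be replaced by a finite first-order formula using G\"odel's $\beta$ predicate, as sketched after Definition~\ref{def.weakestprecondition}. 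Granting that arithmetization — which is exactly the ``relative'' in relative completeness, since it is where validity of assertions is assumed as an oracle — the application of $(CONS)$ goes through unchanged and the proof is complete.
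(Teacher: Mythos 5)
Your proof is correct and follows exactly the paper's argument: apply Proposition~\ref{weakest precondition} to get $\models \phi \rightarrow wp(C,\psi)$, Proposition~\ref{precondition} to get $\vdash \{wp(C,\psi)\} C \{\psi\}$, and conclude by rule $(CONS)$. The extra remarks about the while-loop invariant and the G\"odel $\beta$-predicate encoding are accurate context but not a deviation from the paper's proof.
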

 
\begin{proof} 
If $\models \{\phi \} C \{ \psi\}$, then by Proposition \ref{weakest precondition} we know $\models \phi \rightarrow  wp(C, \psi)$. By Proposition \ref{precondition} we know $\vdash \{ wp(C, \psi) \} C \{\psi \}$. Then apply the derivation rule (CONS) we have  $\vdash \{\phi \} C \{ \psi\}$.   
\end{proof}

So far, we have demonstrated that PHL$_d$ is both sound and complete. However, Hoare triples only deal with deterministic formulas. Probabilistic assertions, on the other hand, are more expressive because they can describe the probabilistic nature of probabilistic states, {\em e.g.}, the probability of $X>3$ is $\frac{1}{2}$. In the next section, we are going to investigate probabilistic assertions and their corresponding proof system. Before we start, it is worth noting that the relationship between WHILE command and IF command shown in the lemma below, which plays an important role in the completeness proof in the next section.

\begin{lemma}\label{key lemma}

For all $i\geq 0$, if $S \models \neg wp(C^0 , \neg B ) \wedge \ldots \wedge    \neg wp(C^{i-1} , \neg B ) \wedge   wp(C^{i} , \neg B ) $, then $ [\![  \texttt{while} \mbox{ } B \mbox{ } \texttt{do} \mbox{ } C  ]\!] \mu_S =  [\![  ( \texttt{if}\mbox{ } B \mbox{ } \texttt{then}\mbox{ }  C \mbox{ }  \texttt{else}\mbox{ }  \texttt{skip} )^{i} ]\!] \mu_S$.

\end{lemma}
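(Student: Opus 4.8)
The plan is to reduce both sides of the claimed equation to explicit expressions in the two elementary operators on sub-distributions $g:=[\![C]\!]\circ\downarrow_{B}$ and $h:=\downarrow_{\neg B}$, and then use the hypothesis to truncate the series coming from the while loop. A routine structural induction on $C$ over Definition~\ref{def.semanticsofcommand} shows that $[\![C]\!]$ is linear on sub-distributions; in particular $[\![C]\!](\mathbf{0})=\mathbf{0}$ and $[\![C]\!]$ is monotone for the pointwise order. From this together with the obvious identities $\downarrow_{B}(\downarrow_{\neg B}(\mu))=\mathbf{0}$ and $\downarrow_{\neg B}(\downarrow_{\neg B}(\mu))=\downarrow_{\neg B}(\mu)$ one gets, as operators, $g\circ h=\mathbf{0}$ and $h\circ h=h$. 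A short induction on $i$, unfolding the semantics of $\texttt{if}$, then yields the \emph{unconditional} identity
$$[\![(\texttt{if}\ B\ \texttt{then}\ C\ \texttt{else}\ \texttt{skip})^{i}]\!](\mu)=g^{i}(\mu)+\sum_{j=0}^{i-1}h(g^{j}(\mu)),$$
while Definition~\ref{def.semanticsofcommand} gives $[\![\texttt{while}\ B\ \texttt{do}\ C]\!](\mu_{S})=\sum_{j=0}^{\infty}h(g^{j}(\mu_{S}))$.

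Comparing these two expressions, it suffices to prove that, under the hypothesis, $g^{i}(\mu_{S})\models\neg B$. Indeed, if every state in $sp(g^{i}(\mu_{S}))$ satisfies $\neg B$, then $\downarrow_{B}(g^{i}(\mu_{S}))=\mathbf{0}$, hence $g^{j}(\mu_{S})=\mathbf{0}$ for all $j>i$ and $h(g^{i}(\mu_{S}))=g^{i}(\mu_{S})$, so $\sum_{j=0}^{\infty}h(g^{j}(\mu_{S}))=\sum_{j=0}^{i-1}h(g^{j}(\mu_{S}))+g^{i}(\mu_{S})$, which is exactly the value of the $\texttt{if}$-iteration at $\mu_{S}$. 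To obtain $g^{i}(\mu_{S})\models\neg B$ I would use only the last conjunct $S\models wp(C^{i},\neg B)$ of the hypothesis (the earlier conjuncts only serve to single out the index $i$ and play no role in this implication). By Corollary~\ref{corollary.weakestprecondition}, $S\models wp(C^{i},\neg B)$ means $[\![C^{i}]\!](\mu_{S})\models\neg B$, i.e.\ every state in $sp([\![C^{i}]\!](\mu_{S}))$ satisfies $\neg B$. Separately, from $\downarrow_{B}(\nu)\le\nu$, monotonicity of $[\![C]\!]$ and $[\![C^{i}]\!]=[\![C]\!]^{i}$, one shows by induction on $i$ that $g^{i}(\mu_{S})\le[\![C^{i}]\!](\mu_{S})$ pointwise, hence $sp(g^{i}(\mu_{S}))\subseteq sp([\![C^{i}]\!](\mu_{S}))$; combining the two facts gives $g^{i}(\mu_{S})\models\neg B$.

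The only delicate point is the pointwise bound $g^{i}(\mu_{S})\le[\![C^{i}]\!](\mu_{S})$: it is the precise form of the intuition that inserting the guard $\downarrow_{B}$ before each copy of $C$ can only delete probability mass, never create new reachable states, and it is what lets us transfer the content of $wp(C^{i},\neg B)$ --- a statement about the \emph{unguarded} iteration $C^{i}$ --- to the guarded iteration $([\![C]\!]\circ\downarrow_{B})^{i}$ that actually occurs inside the loop. Everything else is bookkeeping with the two operator identities together with linearity of $[\![C]\!]$; no new structural insight about the logic is required.
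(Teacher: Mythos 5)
Your proof is correct and follows essentially the same route as the paper: both arguments transfer the hypothesis $S \models wp(C^{i},\neg B)$ from the unguarded iterate $[\![C^{i}]\!](\mu_S)$ to the guarded iterate $([\![C]\!]\circ\downarrow_{B})^{i}(\mu_S)$ via the support inclusion, truncate the infinite sum defining the while-semantics at stage $i$, and match it term by term with the unfolded semantics of $(\texttt{if}\ B\ \texttt{then}\ C\ \texttt{else}\ \texttt{skip})^{i}$ using linearity and the identities $\downarrow_{B}\circ\downarrow_{\neg B}=\mathbf{0}$, $\downarrow_{\neg B}\circ\downarrow_{\neg B}=\downarrow_{\neg B}$. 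The only differences are presentational: you prove the closed form of the if-iteration uniformly for all $i$ (and observe that only the last conjunct of the hypothesis is needed), whereas the paper works out the representative case $i=2$ and declares the other cases similar.
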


\begin{proof}

We first prove the case where $i=0$. Here we have $S \models wp(C^{0} , \neg B)$. That is, $S \models \neg B$. It is then trivial to prove $ [\![  \texttt{while} \mbox{ } B \mbox{ } \texttt{do} \mbox{ } C  ]\!] \mu_S =  [\![  ( \texttt{if}\mbox{ } B \mbox{ } \texttt{then}\mbox{ }  C \mbox{ }  \texttt{else}\mbox{ }  \texttt{skip} )^{0} ]\!] \mu_S$.

We now prove the case with $i=2$, other cases are similar.
Assume $i=2$. Then we have $S\models \neg wp(C^0,\neg B) \wedge \neg wp(C^1, \neg B) \wedge wp(C^2 ,\neg B)$. 	Then we know $S \not\models wp( \texttt{skip} , \neg B)$, $S \not\models wp(  C   , \neg B)$  and $S\models wp(C;C ,\neg B)$. Therefore, $S\models \neg B$, $[\![C ]\!] S \not \models  \neg B$ and $[\![C;C ]\!] S \models  \neg B$. We then know $\downarrow_{  B} ( [\![C;C ]\!] S)  =  \textbf{0}  $.

It's easy to see that $sp(     [\![   C   ]\!] \circ \downarrow_{ B}  (\mu_S ) ) \subseteq sp(    [\![   C   ]\!]    (\mu_S ) ) $. Then from $S \models wp(C^2 , \neg B)$ we deduce  $[\![   C^2   ]\!] S \models \neg B $, which implies $[\![   (C \circ  \downarrow_{ B})^2   ]\!] S \models \neg B $.

Then $\displaystyle\sum_{i = 0}^{\infty} \downarrow_{\neg B}  (  (  [\![   C   ]\!] \circ \downarrow_{ B})^i (\mu_S ) )  =   \downarrow_{\neg B}  (  (  [\![   C   ]\!] \circ \downarrow_{ B})^0 (\mu_S ) )  + \downarrow_{\neg B}  (  (  [\![   C   ]\!] \circ \downarrow_{ B})^1 (\mu_S ) )  + \downarrow_{\neg B}  (  (  [\![   C   ]\!] \circ \downarrow_{ B})^2 (\mu_S ) ) + \downarrow_{\neg B}  (  (  [\![   C   ]\!] \circ \downarrow_{ B})^3 (\mu_S ) ) + \ldots  =  \downarrow_{\neg B}  (  (  [\![   C   ]\!] \circ \downarrow_{ B})^0 (\mu_S ) )  + \downarrow_{\neg B}  (  (  [\![   C   ]\!] \circ \downarrow_{ B})^1 (\mu_S ) )  + \downarrow_{\neg B}  (  (  [\![   C   ]\!] \circ \downarrow_{ B})^2 (\mu_S ) )  =  \downarrow_{ \neg B} (\mu_S  ) +  \downarrow_{ \neg B}\circ  [\![   C   ]\!] \circ \downarrow_{ B} ( \mu_S  )  + ([\![   C   ]\!] \circ \downarrow_{ B} )^2  (\mu_S  )  $.

On the other hand,  $[\![\texttt{if}  \mbox{ }  B   \mbox{ }  \texttt{then}   \mbox{ }  C  \mbox{ }  \texttt{else}  \mbox{ } \texttt{skip}  ]\!]^{2}   \mu_S  =  $\\
$ [\![\texttt{if}  \mbox{ }  B   \mbox{ }  \texttt{then}   \mbox{ }  C  \mbox{ }  \texttt{else}  \mbox{ } \texttt{skip}  ]\!]   (   [\![   C   ]\!] \circ \downarrow_{ B} (\mu_S )  + \downarrow_{ \neg B} (\mu_S ) )  =  $\\
$[\![   C   ]\!] \circ \downarrow_{ B} ([\![   C   ]\!] \circ \downarrow_{ B} (\mu_S )  + \downarrow_{ \neg B} (\mu_S ) )  + \downarrow_{ \neg B} ([\![   C   ]\!] \circ \downarrow_{ B} (\mu_S )  + \downarrow_{ \neg B} (\mu_S ) ) = $\\
$[\![   C   ]\!] \circ \downarrow_{ B}  [\![   C   ]\!] \circ \downarrow_{ B} (\mu_S )  +  [\![   C   ]\!] \circ \downarrow_{ B} (\downarrow_{ \neg B} (\mu_S ) )  + \downarrow_{ \neg B} [\![   C   ]\!] \circ \downarrow_{ B} (\mu_S )  + \downarrow_{ \neg B} (\downarrow_{ \neg B} (\mu_S ) )  = $\\
$ ([\![   C   ]\!] \circ \downarrow_{ B} )^2  (\mu_S )  +  \textbf{0}  + \downarrow_{ \neg B} [\![   C   ]\!] \circ \downarrow_{ B} (\mu_S )  +   \downarrow_{ \neg B} (\mu_S )  =$\\
$   \downarrow_{ \neg B} (\mu_S )  +  \downarrow_{ \neg B} [\![   C   ]\!] \circ \downarrow_{ B} (\mu_S )  + ([\![   C   ]\!] \circ \downarrow_{ B} )^2  (\mu_S )  $.

\end{proof}

\section{Probabilistic Hoare Logic with probabilistic formulas}\label{Probabilistic Hoare Logic with probabilistic assertion}

\subsection{Probabilistic formulas}

In order to describe the probabilistic aspects of probabilistic states, we need to `expand' deterministic formulas to probabilistic formulas. In this section, we first define real expressions (a.k.a. probabilistic expression) which are building blocks of probabilistic formulas.

\begin{definition}[Real expressions] Let $\mathbb{RV}$ be a set of real variables. The real expression $r$ is defined as follows:

$$r:=  a \mid      \mathfrak{x}      \mid  \mathbb{P}(\phi ) \mid r\mbox{ } aop \mbox{ } r $$
 
\end{definition}

Here $a \in \mathbb{R}$ is a real number and $\mathfrak{x}   \in \mathbb{RV}$ is a real variable.  $\mathbb{P}(\phi )$ is the probability of a \textit{deterministic assertion} $\phi$ being true. Given an interpretation $I$ which maps logical variables to integers and maps real variables to real numbers, the semantics of real expressions are defined in the standard way as follows.

\begin{definition}[Semantics of real expressions]  Given an interpretation $I$ and a probabilistic state $\mu$, the semantics of real expressions is defined inductively as follows.

\begin{itemize}
\item $[\![a ]\!]^I_{\mu} = a $.

 \item $[\![  \mathfrak{x}     ]\!]^I_{\mu} = I(  \mathfrak{x}   ) $.

\item $[\![ \mathbb{P}(\phi )  ]\!]^I_{\mu}  = \sum\limits_{S\models^I \phi} \mu (S) $.

\item $[\![ r_1\mbox{ } aop \mbox{ } r_2   ]\!]^I_{\mu} =   [\![r_1  ]\!]^I_{\mu}    \mbox{ } aop \mbox{ }    [\![r_2  ]\!]^I_{\mu}  $.
\end{itemize}
\end{definition}

A real number $a$ is interpreted as itself over arbitrary probabilistic state. $I$ decides the value of each real variable $\mathfrak{x}$. The expression $\mathbb{P}(\phi)$ represents the probability that the deterministic formula $\phi$ holds over some probabilistic state. It can be computed by summing up all probabilities of these deterministic states that make $\phi$ true. $r_1\ aop\ r_2$ characterizes the arithmetic calculation between two real expressions. 

Now we define probabilistic formulas (assertions). The primitive type of probabilistic formula is to show the relationship between two real expressions. They are often used for comparing two probabilities, e.g. $\mathbb{P}(X>1)<\frac{1}{2}$. 

\begin{definition}[Syntax of probabilistic formulas]  Probabilistic formulas are defined inductively as follows.

$$\Phi=    (r \mbox{ }  rop \mbox{ }   r) \mid \neg \Phi \mid (\Phi \wedge \Phi)$$

where $r$ is a real expression. 
\end{definition}

For example,  $(\mathbb{P}(X>0)>\frac{1}{2})\wedge \neg (\mathbb{P}(X>1)<\mathbb{P}(X>0))$ is a well-formed probabilistic formula. 

\begin{definition}[Semantics of probabilistic formulas]\label{def.semanticsprobabilisticformulas}
Given an interpretation $I$, the semantics of probabilistic assertion is defined on probabilistic states $\mu$ as follows:  
 
\begin{itemize}
\item $\mu \models^I   r_1  \mbox{ }  rop \mbox{ }     r_2 $ if  $  [\![r_1]\!]^I_{\mu} \mbox{ }  rop \mbox{ }   [\![r_2]\!]^I_{\mu} = \top$ 

\item $\mu \models^I \neg \Phi$ if not $\mu \models^I  \Phi$

\item $\mu \models^I  \Phi_1 \wedge \Phi_2$ if   $\mu \models^I  \Phi_1$ and $\mu \models^I  \Phi_2$

\end{itemize} 
\end{definition}

\subsection{Proof system with probabilistic formulas}

As mentioned in Section 2, the proof system of probabilistic Hoare logic consists of Hoare triples. Now we step forward to Hoare triples $\{\Phi\}C\{\Psi\}$ where $\Phi$ and $\Psi$ are probabilistic formulas. The main technical difficulty is also finding out weakest preconditions for given command and probabilistic formulas. Inspired by Chadha \textit{et al.} \cite{chadha2007reasoning}, we first introduce the notion {\em weakest preterms} which constitute weakest preconditions. 

The terminology `term' just refers to the real expression in this paper. We inherit this word from Chadha \textit{et al.}. A weakest preterm with respect to a given command $C$ and a term (real expression) $r$ intuitively denotes the term whose interpretation of the initial probabilistic state is the same as the interpretation of $r$ on the resulting state after executing $C$. We also borrow the notion of the conditional term from Chadha \textit{et al.} \cite{chadha2007reasoning}, which will be useful in defining the weakest preterm.

\begin{definition}[Conditional terms] The conditional term $r/B$ of a real expression  $r$ with a Boolean formula $B$ (a deterministic formula) is inductively defined as follows.
 \begin{itemize}
 
\item $a/B =a $

 \item $ \mathfrak{x}/B = \mathfrak{x} $ 
 
 \item $  \mathbb{P}(\phi ) /B =  \mathbb{P}(\phi \wedge B)$
 
 \item $(r_1\mbox{ } aop \mbox{ } r_2 ) /B = r_1/B \mbox{ } aop \mbox{ } r_2/B$
 
 \end{itemize}
\end{definition}

A conditional term intuitively denotes a probability under some condition described by a deterministic formula, which is shown by the next lemma.

 \begin{lemma}
 
 $ [\![ r /B ]\!]^I_{\mu} = [\![ r  ]\!]^I_{\downarrow_{B}  \mu}  $

 \end{lemma}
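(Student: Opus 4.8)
The statement $[\![ r/B ]\!]^I_{\mu} = [\![ r ]\!]^I_{\downarrow_{B}\mu}$ is a purely computational identity, and the natural approach is structural induction on the real expression $r$, following exactly the four clauses in the definition of the conditional term $r/B$. Let $I$ be an arbitrary interpretation and $\mu$ an arbitrary probabilistic state throughout.

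The plan is to handle the base cases first. For $r = a$ a real constant, we have $a/B = a$, and $[\![a]\!]^I_{\mu} = a = [\![a]\!]^I_{\downarrow_{B}\mu}$ directly from the semantics of real expressions, since the interpretation of a constant does not depend on the probabilistic state. For $r = \mathfrak{x}$ a real variable, similarly $\mathfrak{x}/B = \mathfrak{x}$ and $[\![\mathfrak{x}]\!]^I_{\mu} = I(\mathfrak{x}) = [\![\mathfrak{x}]\!]^I_{\downarrow_{B}\mu}$, again because the value depends only on $I$. The genuinely informative base case is $r = \mathbb{P}(\phi)$: here $\mathbb{P}(\phi)/B = \mathbb{P}(\phi \wedge B)$, so I must show $[\![\mathbb{P}(\phi \wedge B)]\!]^I_{\mu} = [\![\mathbb{P}(\phi)]\!]^I_{\downarrow_{B}\mu}$. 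Unfolding the left side gives $\sum_{S \models^I \phi \wedge B} \mu(S)$, and unfolding the right side gives $\sum_{S \models^I \phi} (\downarrow_{B}\mu)(S)$. The key observation is that $(\downarrow_{B}\mu)(S) = \mu(S)$ when $[\![B]\!]S = \top$ and $(\downarrow_{B}\mu)(S) = 0$ otherwise (by the definition of the restriction operator $\downarrow_B$), so the states contributing to $\sum_{S \models^I \phi}(\downarrow_B\mu)(S)$ are exactly those $S$ with $S \models^I \phi$ and $[\![B]\!]S = \top$, i.e.\ those with $S \models^I \phi \wedge B$, and on each such state $(\downarrow_B\mu)(S) = \mu(S)$. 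Hence the two sums are equal.

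For the inductive step $r = r_1 \ aop\ r_2$, we have $(r_1\ aop\ r_2)/B = r_1/B \ aop\ r_2/B$. By the semantics, $[\![r_1/B\ aop\ r_2/B]\!]^I_{\mu} = [\![r_1/B]\!]^I_{\mu}\ aop\ [\![r_2/B]\!]^I_{\mu}$, which by the induction hypothesis applied to $r_1$ and $r_2$ equals $[\![r_1]\!]^I_{\downarrow_B\mu}\ aop\ [\![r_2]\!]^I_{\downarrow_B\mu} = [\![r_1\ aop\ r_2]\!]^I_{\downarrow_B\mu}$. This closes the induction.

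There is no real obstacle here; the only place requiring any care is the $\mathbb{P}(\phi)$ case, where one must be precise that $\downarrow_B$ zeroes out exactly the non-$B$ states and leaves the others untouched, so that restricting the distribution and conjoining $B$ into the event have the same effect on the summation. Everything else is a mechanical unfolding of definitions. I would write the proof as the four-case induction described above, spending one or two sentences on each of the trivial cases and giving the summation manipulation explicitly for $\mathbb{P}(\phi)$.
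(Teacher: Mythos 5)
Your proof is correct and follows essentially the same route as the paper's: a structural induction in which only the $\mathbb{P}(\phi)$ case is non-trivial, resolved by the identity $[\![\mathbb{P}(\phi\wedge B)]\!]^I_{\mu}=[\![\mathbb{P}(\phi)]\!]^I_{\downarrow_{B}\mu}$ coming directly from the definition of $\downarrow_{B}$. You merely spell out the summation manipulation and the trivial cases that the paper leaves implicit.
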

 
 \begin{proof}

The only non-trivial case is when the term $r$ of the form $ \mathbb{P}(\phi ) $. In this case, we have

$ [\![  \mathbb{P}(\phi )  /B ]\!]^I_{\mu} = [\![ \mathbb{P}(\phi \wedge B) ]\!]^I_{\mu}=   [\![  \mathbb{P}(\phi )   ]\!]^I_{\downarrow_{B}  \mu}  $

\end{proof}

Now, we are fully prepared to define the weakest preterms. The weakest preterms for real expression $a$, $\mathfrak{x}$ and $r_1\ aop\ r_2$ are straightforward. In terms of $\mathbb{P}(\phi)$, we need to split cases by different commands among which WHILE command brings a lot of subtleties.

\begin{definition}[weakest preterms]\label{def.weakestpreterm}
The weakest preterm of a real expression  $r$ with command $C$ is inductively defined as follows.

\begin{enumerate}

\item $pt(C,a)=a$

\item $pt(C,  \mathfrak{x}   )=  \mathfrak{x}   $

\item $pt(C,r_1\mbox{ } aop \mbox{ } r_2 )= pt(C, r_1) \mbox{ } aop \mbox{ } pt(C, r_2)$

\item  $ pt (\texttt{skip}, \mathbb{P}(\phi  )) = \mathbb{P}(  \phi )$

\item  $ pt (X\leftarrow E , \mathbb{P}(\phi  )) = \mathbb{P}( \phi[X/E] )$ 

\item $ pt (   X\xleftarrow[]{\$}  \{a_1: k_1, \ldots ,a_n:k_n\} ,       \mathbb{P}(\phi  )) =$\\ 
$ a_1    \mathbb{P} ( \phi[X/k_1] \wedge \neg  \phi[X/k_2] \wedge \ldots \neg \phi[X/k_n]    )  + $\\ 
$a_2    \mathbb{P} ( \neg \phi[X/k_1] \wedge    \phi[X/k_2] \wedge \neg \phi[X/k_3] \wedge  \ldots  \wedge \neg \phi[X/k_n]    ) +  \ldots + $
$a_n    \mathbb{P} ( \neg \phi[X/k_1] \wedge     \ldots \wedge \neg \phi[X/k_{n-1}] \wedge    \phi[X/k_n]  ) + $\\ 
$(a_1 +a_2) \mathbb{P} ( \phi[X/k_1] \wedge   \phi[X/k_2] \wedge  \neg   \phi[X/k_3] \wedge \ldots \wedge \neg \phi[X/k_n]    ) + \ldots +  (a_1+ \ldots + a_n)  \mathbb{P} ( \phi[X/k_1] \wedge    \ldots \wedge \phi[X/k_n]    )  $

\item $pt(C_1;C_2 ,\mathbb{P}(\phi  ) ) = pt (C_1, pt(C_2, \mathbb{P}(\phi  ))) $

\item  $ pt (\texttt{if}\ B\ \texttt{then}\ C_{1}\ \texttt{else}\ C_{2} ,  \mathbb{P}(\phi  )  ) =  pt(C_1,  \mathbb{P}(\phi  )  )/B   + \\ pt(C_2,  \mathbb{P}(\phi  )  )/(\neg B)  $  

\item $ pt( \texttt{while} \mbox{ } B \mbox{ } \texttt{do} \mbox{ } C ,   \mathbb{P}(\phi  ))   =  \displaystyle\sum_{i = 0}^{\infty}  T_i $, in which $T_i$ is defined via the following procedure.  We  use the following abbreviation

\begin{itemize}

\item $wp(i) $ is short for $  \neg wp(C^0 , \neg B ) \wedge \ldots \wedge    \neg wp(C^{i-1} , \neg B ) \wedge   wp(C^{i} , \neg B ) $

\item $wp(\infty) $ is short for $ \displaystyle \bigwedge_{i=0}^{ \infty}  \neg  wp(C^{i} , \neg B ) $

\item $WL$ is short for $\texttt{while} \mbox{ } B \mbox{ } \texttt{do} \mbox{ } C $

\item $IF$ is short for $\texttt{if}\ B\ \texttt{then}\ C \ \texttt{else} \mbox{ }  \texttt{skip}$

\item   $SUM$ is short for $\displaystyle\sum_{i = 0}^{\infty} (  \mathbb{P}(    wp(i)    )      ( pt ( (  IF)^i  ,\mathbb{P}(\phi  )  ) / (    wp(i)   )) )$

\end{itemize}

For any probabilistic state $\mu$, we let\\
 $\mu_0 = \downarrow_{wp(\infty)}(\mu)$, $\mu_{i+1}=\downarrow_{wp(\infty)}(  [\![C]\!] \mu_i) $.
 
 \begin{itemize}
\item   Let $T_0 = SUM$.
\item Let $T_1$ be the unique real expression  such that 
 $$[\![  T_1]\!]_{\mu } = [\![   \mathbb{P}(  wp(\infty)  )     ]\!]_{\mu }  [\![ SUM]\!]_{ [\![ C]\!] \mu_0 }      $$ for all probabilistic state $\mu$.
Equivalently,  $$T_1  =   \mathbb{P}(  wp(\infty)  )     (       pt(C,SUM) /wp(\infty)        )   .$$
\item Let $T_2$   be the unique real expression  such that 
$$[\![  T_2]\!]_{\mu } = [\![   \mathbb{P}(  wp(\infty)  )     ]\!]_{\mu }     \mathbb{P}(  wp(\infty)  )     ]\!]_{ [\![ C]\!] \mu_0 }          [\![ SUM]\!]_{ [\![ C]\!] \mu_1 }      $$
 for all probabilistic state $\mu$. Equivalently, $T_2=$
 $$  \mathbb{P}(  wp(\infty)  )         (       pt(C,      \mathbb{P}(  wp(\infty)  )    ) /wp(\infty)        ) (            pt(C,             pt(C,SUM) /wp(\infty)          ) /wp(\infty)                         )   $$
 
 \item  In general, we let $T_i$   be the unique real expression  such that $[\![  T_i]\!]_{\mu } =$
 $$  [\![   \mathbb{P}(  wp(\infty)  )     ]\!]_{\mu }      [\![\mathbb{P}(  wp(\infty)  )     ]\!]_{ [\![ C]\!] \mu_0 }   \ldots $$
 $$  [\![\mathbb{P}(  wp(\infty)  )     ]\!]_{ [\![ C]\!] \mu_{i-2} }  [\![ SUM]\!]_{ [\![ C]\!] \mu_{i-1} }  $$
 for all probabilistic states $\mu$. Equivalently, let $f_{C,B}$ be the function which maps a real expression $r$ to $f_{C,B}(r) = pt(C,r) / wp(\infty)  $. Then 
 
 $$T_i = f_{C,B}^i(SUM)  \displaystyle\prod_{j = 0}^{i-1}     f_{C,B}^j ( \mathbb{P}(  wp(\infty)  ) ).  $$
 
 Therefore, $$ pt( \texttt{while} \mbox{ } B \mbox{ } \texttt{do} \mbox{ } C ,   \mathbb{P}(\phi  ))   =  \displaystyle\sum_{i = 0}^{\infty}   f^i(SUM)  \displaystyle\prod_{j = 0}^{i-1}     f^j ( \mathbb{P}(  wp(\infty)  ) ).  $$
 
 \end{itemize}

\end{enumerate}

\end{definition}

Let us explain the definition of $pt(\texttt{while} \mbox{ } B \mbox{ } \texttt{do} \mbox{ } C ,   \mathbb{P}(\phi  ))$. Given a deterministic state $S$, $S\models wp(i)$ means that $S$ will  lead to a probabilistic state satisfying $\neg B$ (all its supports satisfy $\neg B$) after executing $C$ for \textit{exactly} $i$ times. Thus, $SUM$ represents the probability of reaching a probabilistic state satisfying $\neg B$ by executing $C$ in  finitely many steps. In contrast, $S\models wp(\infty)$ means that $S$ will never achieve a probabilistic state satisfying  $\neg B$. So, given a probabilistic state, some supports of it satisfy $wp(i)$ for some $i$ (the probability of all of them is denoted by $SUM$), but others satisfy $wp(\infty)$. For a deterministic state which satisfies $wp(\infty)$, we execute $C$ for 1 time to get a probabilistic state, of which the support   is a  set of deterministic states. Among those deterministic states, we   use $T_1$ to represent those ones which will terminate on $\neg B$ after finite times of execution. For those states which will not terminate on $\neg B$ in finite times, we execution $C$ for the 2nd time. Then $T_2$ will pop up. So on so forth. In this way we construct $T_i$ for each $i\geq 1$.

Although according to the above definition, $pt(C,r)$ is a syntactic notion. In practice we will always treat it as a semantic notion. That is, if $[\![ pt(C,r_1) ]\!]_{\mu}^{I} = [\![ r_2 ]\!]_{\mu}^{I}  $ for all probabilistic state $\mu$ and interpretation $I$, then we say that $r_2$ is a preterm of  $r_1$ with command $C$. The above definition constitute the weakest preterm calculus of probabilistic expression. It tells us how to calculate the weakest preterm for any give probabilistic expression and command. Now we use some examples to demonstrate the usage of the  weakest preterm calculus.

\begin{example} Let's calculate $ pt( \texttt{while} \mbox{ } \top \mbox{ } \texttt{do} \mbox{ } \texttt{skip} ,   \mathbb{P}( \top  ))$. In this case we have $wp(\texttt{skip}^{i} , \neg \top ) =  wp(\texttt{skip}  ,  \bot ) =\bot $. (Here we overload the symbol $=$ to represent semantic equivalence.) Then we know $wp(i) =\bot$ for all $i$ and $wp(\infty)=\top$. Therefore, $SUM=0$ because $ \mathbb{P}(    wp(i)    ) =  \mathbb{P}(   \bot   ) =0$. Moreover, $T_1  =   \mathbb{P}(  wp(\infty)  )     (       pt(  \texttt{skip} , $  $SUM) /wp(\infty)        )   =   \mathbb{P}(  \top  )     (       pt(  \texttt{skip} , 0 ) / \top       )  =  \mathbb{P}(  \top  )     \cdot 0  =0  $. We further note that $f_{ \texttt{skip}  , \top }(r) = pt( \texttt{skip} ,r) /\top =r $. Therefore, $T_i =    0 $ for all $i$. We then know $ pt( \texttt{while} \mbox{ } \top \mbox{ } \texttt{do} \mbox{ } \texttt{skip} ,   \mathbb{P}( \top  )) = 0$.

\end{example}

\begin{example} Let $C$ be the following program: 
$$X \leftarrow \{\frac{1}{3} :0, \frac{2}{3} :1\}  ; \texttt{if}\ X = 0 \ \texttt{then}\  (\texttt{while} \mbox{ } \top \mbox{ } \texttt{do} \mbox{ } \texttt{skip})   \ \texttt{else}\ \texttt{skip}   $$
\noindent
Let's calculate $ pt( C ,   \mathbb{P}( \top  )    )$. We have 
$$pt(C,  \mathbb{P}( \top  )) = pt (  X \leftarrow \{\frac{1}{3} :0, \frac{2}{3} :1\}    , $$  $$pt ( \texttt{if}\ X = 0 \ \texttt{then}\  (\texttt{while} \mbox{ } \top \mbox{ } \texttt{do} \mbox{ } \texttt{skip})   \ \texttt{else}\ \texttt{skip}  ,   \mathbb{P}( \top  )    )   ) .$$
$$  pt (  \texttt{if}\ X = 0 \ \texttt{then}\  (\texttt{while} \mbox{ } \top \mbox{ } \texttt{do} \mbox{ } \texttt{skip})   \ \texttt{else}\ \texttt{skip}  ,  \mathbb{P}( \top  ) )= $$
$$(pt (\texttt{while} \mbox{ } \top \mbox{ } \texttt{do} \mbox{ } \texttt{skip},   \mathbb{P}( \top  ) )   /(X=0 ) ) + (pt (  \texttt{skip},   \mathbb{P}( \top  ) )  / (\neg X=0) )=$$
$$( 0 /(X=0 ) ) + ( pt (  \texttt{skip},   \mathbb{P}( \top  ) )  / (\neg X=0))=$$
$$ 0 +    (\mathbb{P}( \top  )   / (\neg X=0))=    \mathbb{P}(   X \neq 0  ). $$
 Therefore, $pt(C,  \mathbb{P}( \top  )) = pt( X \leftarrow \{\frac{1}{3} :0, \frac{2}{3} :1\} ,  \mathbb{P}(  X\neq 0  ) )= $

$\frac{1}{3}  \mathbb{P}( 0\neq 0 \wedge \neg 1 \neq 0         ) + \frac{2}{3}  \mathbb{P}(        \neg  0\neq 0 \wedge 1\neq 0 )  + (\frac{1}{3} +\frac{2}{3}) \mathbb{P}(     0\neq 0 \wedge 1\neq 0     ) = \frac{1}{3}  \mathbb{P}( \bot) +  \frac{2}{3}  \mathbb{P}( \top ) + \mathbb{P}( \bot) =  \frac{2}{3}  \mathbb{P}( \top ) $.

\end{example}

Now we prove the characterization lemma of preterms. This lemma indicates that the preterm of a real expression $r$ (can be understood as a probability) in terms of command $C$ on a given probabilistic state $\mu$ equals $r$ after executing $C$.

\begin{lemma} \label{preterm lemma}

Given a probabilistic state $\mu $, an interpretation $I$, a command $C$ and a real expression $r$,   

   $$ [\![  pt  ( C,  r ) ]\!]^I_{\mu } =  [\![ r ]\!]^I_{ [\![ C]\!] \mu } $$

\end{lemma}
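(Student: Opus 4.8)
The plan is to prove the statement by structural induction on the command $C$, where for each fixed $C$ the induction hypothesis supplies the identity for every sub-command and \emph{every} real expression $r$. Inside each command case, a trivial sub-induction on $r$ handles clauses (1)--(3) of Definition~\ref{def.weakestpreterm}: $pt(C,a)=a$, $pt(C,\mathfrak{x})=\mathfrak{x}$ and $pt(C,r_1\ aop\ r_2)=pt(C,r_1)\ aop\ pt(C,r_2)$ commute with the compositional semantics of real expressions, so the whole argument reduces to the case $r=\mathbb{P}(\phi)$, i.e. to showing $[\![pt(C,\mathbb{P}(\phi))]\!]^I_{\mu}=\sum_{S\models^I\phi}([\![C]\!]\mu)(S)$.

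For $C=\texttt{skip}$ this is immediate. For $C=X\leftarrow E$ and $C=X\xleftarrow{\$}R$ I would unfold $[\![C]\!]\mu$ via Definition~\ref{def.semanticsofcommand} and reduce the right-hand side to a sum over the supports $S$ of $\mu$: the deterministic case uses the substitution fact (the one underlying Corollary~\ref{corollary.weakestprecondition}) that $S[X\mapsto[\![E]\!]S]\models^I\phi$ iff $S\models^I\phi[X/E]$; the probabilistic case partitions each support $S$ according to the subset $J\subseteq\{1,\dots,n\}$ of branches $S[X\mapsto k_j]$ that satisfy $\phi$, so that $S$ contributes $(\sum_{j\in J}a_j)\,\mu(S)$ to $\sum_{S'\models^I\phi}([\![C]\!]\mu)(S')$, which is exactly the coefficient collected in clause (6). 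The case $C=C_1;C_2$ follows from clause (7) by applying the induction hypothesis first to $C_2$ with $\mathbb{P}(\phi)$ and then to $C_1$ with the real expression $pt(C_2,\mathbb{P}(\phi))$. The case $C=\texttt{if}\ B\ \texttt{then}\ C_1\ \texttt{else}\ C_2$ combines the induction hypothesis with the conditional-term lemma $[\![r/B]\!]^I_{\mu}=[\![r]\!]^I_{\downarrow_B\mu}$ and the identity $[\![C]\!]\mu=[\![C_1]\!](\downarrow_B\mu)+[\![C_2]\!](\downarrow_{\neg B}\mu)$.

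The substantial case is $C=\texttt{while}\ B\ \texttt{do}\ C'$. I would first split an arbitrary $\mu$ along the pairwise-disjoint formulas $wp(i)$ ($i\ge 0$) and $wp(\infty)$, which together cover $\mathbb{S}$ up to $\mu$-measure zero. On $\downarrow_{wp(i)}\mu$, Lemma~\ref{key lemma} lets us replace the loop by $(\texttt{if}\ B\ \texttt{then}\ C'\ \texttt{else}\ \texttt{skip})^i$, so the $\phi$-mass contributed by that piece is computed by the inner induction hypothesis applied to $(IF)^i$ and, after conditioning on $wp(i)$, is precisely the $i$-th summand of $SUM$. For the piece $\downarrow_{wp(\infty)}\mu$, one application of $C'$ produces $[\![C']\!]\mu_0$, whose support again decomposes into a finitely-terminating part, accounted for by $SUM$ evaluated at $[\![C']\!]\mu_0$ (yielding $T_1$ once the factor $\mathbb{P}(wp(\infty))$ is reinstated and $pt(C',\cdot)/wp(\infty)$ is applied), and a residual part $\mu_1=\downarrow_{wp(\infty)}([\![C']\!]\mu_0)$ carried into the next round to produce $T_2$, and so on; the accumulated factors $\prod_{j}f^j_{C',B}(\mathbb{P}(wp(\infty)))$ are exactly the recursion defining $T_i$. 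Summing all the pieces gives $\sum_{i\ge 0}T_i=pt(C,\mathbb{P}(\phi))$.

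The main obstacle I expect is making the while case fully rigorous: one must justify interchanging the infinite sum $\sum_{i=0}^{\infty}\downarrow_{\neg B}(([\![C']\!]\circ\downarrow_B)^i(\mu))$ in the loop semantics with the evaluation of $\mathbb{P}(\phi)$ and with the further countable sums hidden inside $SUM$ and inside $pt(C',\cdot)$ applied to such infinite objects. Since every distribution in sight is a sub-distribution and every term is nonnegative, this should go through by a monotone-convergence / Tonelli-style reordering, but it forces one to prove, alongside Lemma~\ref{preterm lemma} (or as a mild strengthening of the induction hypothesis), that $pt(C,\cdot)$ and the evaluation map commute with countable sums of real expressions, so that the formally infinitary expressions $SUM$ and $\sum_i T_i$ are well defined and behave as intended. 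A lesser point is verifying that $\{wp(i)\}_{i\ge 0}$ and $wp(\infty)$ really partition $\mathbb{S}$, which is immediate from the definitions of $wp$ and of the loop semantics.
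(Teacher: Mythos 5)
Your proposal is correct and follows essentially the same route as the paper's own proof: reduce via clauses (1)--(3) to the case $r=\mathbb{P}(\phi)$, treat the basic commands by unfolding the semantics, and for the loop decompose $\mu$ along the formulas $wp(i)$ and $wp(\infty)$, invoke Lemma~\ref{key lemma} to replace the loop by $(IF)^i$ on the terminating pieces (yielding $SUM$), and iterate on the $wp(\infty)$ residue to generate the terms $T_i$. Your explicit concern about justifying the countable-sum interchanges is a refinement of a step the paper leaves informal (``repeat the above procedure to infinity''), not a departure from its argument.
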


\begin{proof}

\begin{enumerate}

\item $ [\![ pt(C,  a  ) ]\!]^I_{\mu } = [\![  a ]\!]^I_{\mu } =  a =  [\![  a ]\!]^I_{ [\![ C]\!] \mu }$.

\item $ [\![ pt(C,  \mathfrak{x}   ) ]\!]^I_{\mu } = [\![  \mathfrak{x}   ]\!]^I_{\mu } =  I( \mathfrak{x}  ) =  [\![  \mathfrak{x}   ]\!]^I_{ [\![ C]\!] \mu }$.

\item $ [\![ pt(C, r_1\mbox{ } aop \mbox{ } r_2 ) ]\!]^I_{\mu }    [\![   pt(C, r_1) \mbox{ } aop \mbox{ } pt(C, r_2)     ]\!]^I_{\mu } =   $\\   $  [\![   pt(C, r_1) \mbox{ }   ]\!]^I_{\mu } aop [\![    \mbox{ } pt(C, r_2)     ]\!]^I_{\mu }  =        [\![  r_1  ]\!]^I_{ [\![ C]\!] \mu }   aop    [\![  r_2  ]\!]^I_{ [\![ C]\!] \mu }  $.    

\item    $   [\![  pt  ( \texttt{skip},  \mathbb{P} (\phi) ) ]\!]^I_{\mu }  =   [\![ \mathbb{P} (  \phi  )]\!]^I_{\mu } =    [\![ \mathbb{P} (  \phi  )]\!]^I_{ [\![ \texttt{skip}]\!] \mu } $. 

\item   $ [\![  pt  ( X\leftarrow E ,  \mathbb{P} (\phi) ) ]\!]^I_{\mu } = [\![ \mathbb{P} (   \phi[X/E] )]\!]^I_{\mu }    =  [\![ \mathbb{P} (wp(    X\leftarrow E    , \phi) )]\!]^I_{\mu } $. Since $S \models \phi[X/E] $ iff $ S\models  wp(    X\leftarrow E    , \phi) $ iff $  [\![  X\leftarrow E ]\!] S \models  \phi  $, we know $ [\![ \mathbb{P} (   \phi[X/E] )]\!]^I_{\mu } =  [\![ \mathbb{P} (   \phi  )]\!]^I_{  [\![  X\leftarrow E ]\!] \mu }$.

\item PAS: For the sake of simplicity, we assume $n=2$. No generality is lost with this assumption. We have $ [\![  pt  (  X\xleftarrow[]{\$}  \{a_1: k_1,  a_2:k_2\} ,  \mathbb{P} (\phi) ) ]\!]^I_{\mu }  =   [\![  a_1  \mathbb{P} (\phi[X/k_1 ] \wedge \neg \phi[X/k_2 ] ) +  a_2 \mathbb{P} ( \neg \phi[X/k_1 ] \wedge   \phi[X/k_2 ] ) + (a_1 +a_2 )  \mathbb{P} (\phi[X/k_1 ] \wedge  \phi[X/k_2 ] ) ]\!]^I_{\mu } $. Without loss generality, assume $sp(\mu) =\{S_1, S_2, S_3, S_4\}$, $S_1 \models^I \phi[X/k_1 ] \wedge \neg \phi[X/k_2 ] $, $S_2 \models^I \neg \phi[X/k_1 ] \wedge   \phi[X/k_2 ] $, $S_3 \models^I \phi[X/k_1 ] \wedge   \phi[X/k_2 ] $, $S_4 \models^I \neg \phi[X/k_1 ] \wedge \neg \phi[X/k_2 ] $, $\mu(S_i) = b_i$. Then we have $  \![  a_1  \mathbb{P} (\phi[X/k_1 ] \wedge \neg \phi[X/k_2 ] ) +  a_2 \mathbb{P} ( \neg \phi[X/k_1 ] \wedge   \phi[X/k_2 ] ) + (a_1 +a_2 )  \mathbb{P} (\phi[X/k_1 ] \wedge  \phi[X/k_2 ] ) ]\!]^I_{\mu } =  a_1 b_1 + a_2 b_2 + (a_1 +a_2)b_3$.

We further have  $[\![X\xleftarrow[]{\$}  \{a_1: k_1,  a_2:k_2\}]\!] (S_1) \models^I \mathbb{P} (\phi) = a_1 $,  $[\![X\xleftarrow[]{\$}  \{a_1: k_1,  a_2:k_2\}]\!] (S_2) \models^I \mathbb{P} (\phi) = a_2 $,  $[\![X\xleftarrow[]{\$}  \{a_1: k_1,  a_2:k_2\}]\!] (S_3) \models^I \mathbb{P} (\phi) = a_1 +a_2 $ and $[\![X\xleftarrow[]{\$}  \{a_1: k_1,  a_2:k_2\}]\!] (S_4) \models^I \mathbb{P} (\phi) = 0 $. Then we know $[\![   X\xleftarrow[]{\$}  \{a_1: k_1,  a_2:k_2\}    ]\!] (\mu  ) \models^I \mathbb{P} (\phi) = a_1 b_1 + a_2 b_2 + (a_1 +a_2)b_3  $. This means that $ [\![ \mathbb{P} (  \phi  )]\!]^I_{ [\![    X\xleftarrow[]{\$}  \{a_1: k_1,  a_2:k_2\}         ]\!] \mu } = a_1 b_1 + a_2 b_2 + (a_1 +a_2)b_3 =  [\![  pt  (  X\xleftarrow[]{\$}  \{a_1: k_1,  a_2:k_2\} ,  \mathbb{P} (\phi) ) ]\!]^I_{\mu } $.

 \item SEQ: $   [\![  pt  ( C_1; C_2,  \mathbb{P} (\phi) ) ]\!]^I_{\mu } =   [\![   pt(C_1, pt( C_2, \mathbb{P} (\phi)) )]\!]^I_{\mu } = $\\  $   [\![    pt(C_2, \mathbb{P} (\phi)    )  ]\!]^I_{  [\![  C_1 ]\!]  \mu }  =   [\![ \mathbb{P} (\phi)]\!]^I_{ [\![  C_2 ]\!]  [\![  C_1 ]\!]  \mu }  =    [\![ \mathbb{P} (\phi)  ]\!]^I_{    [\![  C_1 ;C_2 ]\!]  \mu }    $.
 
 \item IF: $[\![  pt  (\texttt{if}\ B\ \texttt{then}\ C_{1}\ \texttt{else}\ C_{2}  ,  \mathbb{P} (\phi) ) ]\!]^I_{\mu } =   $\\

  $ [\![ pt(C_1,   \mathbb{P} (\phi    ) )/B   +   pt(C_2,  \mathbb{P} (\phi  )  )/(\neg B)  ]\!]^I_{\mu } =        $ \\

  $ [\![ pt(C_1,  \mathbb{P} (\phi  )  )/B      ]\!]^I_{\mu }  +      [\![   pt(C_2,  \mathbb{P} (\phi  )  )/(\neg B)  ]\!]^I_{\mu }   = $ \\

    $ [\![ pt(C_1,  \mathbb{P} (\phi  )  )      ]\!]^I_{ \downarrow_{B} \mu }  +      [\![   pt(C_2,  \mathbb{P} (\phi  )  )   ]\!]^I_{ \downarrow_{\neg B} \mu }   = $ \\

  $ [\![    \mathbb{P} (\phi  )       ]\!]^I_{[\![ C_1 ]\!] \downarrow_{B} \mu }  +      [\![      \mathbb{P} (\phi  )    ]\!]^I_{[\![ C_2 ]\!] \downarrow_{\neg B} \mu }   = $ \\
  
  $ [\![    \mathbb{P} (\phi  )       ]\!]^I_{   [\![ C_1 ]\!] \downarrow_{B} \mu +      [\![ C_2 ]\!] \downarrow_{\neg B} \mu  }    =   $   
  $ [\![    \mathbb{P} (\phi  )       ]\!]^I_{   [\![   \texttt{if}\ B\ \texttt{then}\ C_{1}\ \texttt{else}\ C_{2}     ]\!]      \mu }       $

\item WHILE: Without loss of generality, let $sp(\mu ) = \{S_{0,\infty}, S_{0,0},S_{0,1},\ldots  \}$, in which  $S_{0,i} \models wp(i) $ and $S_{0,\infty} \models   wp(\infty )$. Equivalently, we may let $\mu_{S_{0,i}}= \downarrow_{wp(i)} (\mu) $ and $\mu_{S_{0,\infty}}= \downarrow_{wp(\infty)} (\mu) $.

Then  we know  
$\mu(S_{0,i}) =  [\![  \mathbb{P}(   wp(i)  )  ]\!]_{\mu }$, $\mu(S_{0,\infty }) =  [\![  \mathbb{P}(    wp(\infty)  )  ]\!]_{\mu }$.

  That is, $\mu=  [\![  \mathbb{P}(   wp(\infty)  )  ]\!]_{\mu }  \mu_{S_{0,\infty}}  + \displaystyle\sum_{i = 0}^{\infty}  [\![  \mathbb{P}(  wp(i)  )  ]\!]_{\mu } \mu_{S_{0,i}}$.

Then $  [\![   \mathbb{P}(\phi  ) ]\!]_{ [\![ WL  ]\!] \mu }   $

$$   =[\![   \mathbb{P}(\phi  ) ]\!]_{ [\![ WL  ]\!] (  \mu(S_{0,\infty}) \mu_{S_{0,\infty}}  +\displaystyle\sum_{i = 0}^{\infty} \mu(S_{0,i}) \mu_{S_{0,i}  })  }  $$

$$ =[\![   \mathbb{P}(\phi  ) ]\!]_{ [\![ WL ]\!] (  \mu(S_{0,\infty}) \mu_{S_{0,\infty}}    ) } +  [\![   \mathbb{P}(\phi  ) ]\!]_{ [\![WL  ]\!]  \displaystyle\sum_{i = 0}^{\infty} \mu(S_{0,i}) \mu_{S_{0,i}}  } $$

$$=  \mu(S_{0,\infty})  [\![   \mathbb{P}(\phi  ) ]\!]_{ [\![ WL ]\!] (   \mu_{S_{0,\infty }}    ) } +  \displaystyle\sum_{i = 0}^{\infty} \mu(S_{0,i}) [\![   \mathbb{P}(\phi  ) ]\!]_{  [\![ WL  ]\!] \mu_{S_{0,i}}  }  $$

By Lemma \ref{key lemma} we know $[\![ WL  ]\!] \mu_{S_{0,i}}  =   [\![ (IF)^i   ]\!] \mu_{S_{0,i}} $.  \\

Therefore, 

$ [\![ \mathbb{P}(\phi  ) ]\!]_{  [\![WL  ]\!] \mu_{S_{0,i}}  } =   [\![ \mathbb{P}(\phi  ) ]\!]_{  [\![ (IF)^i   ]\!] \mu_{S_{0,i}}    }   $

By induction hypothesis we know
$$  [\![ \mathbb{P}(\phi  ) ]\!]_{  [\![ (  IF)^i   ]\!] \mu_{S_{0,i}}    }  =  [\![pt ( (  IF)^i  ,\mathbb{P}(\phi  )  ) ]\!]_{    \mu_{S_{0,i}}    }  $$

Then we have 
$$  [\![ \mathbb{P}(\phi  ) ]\!]_{ [\![WL  ]\!] \mu_{S_{0,i}}    }  =  [\![pt ( (IF)^i  ,\mathbb{P}(\phi  )  ) ]\!]_{    \mu_{S_{0,i}}    }  $$

Moreover, $$  [\![pt ( (IF)^i  ,\mathbb{P}(\phi  )  ) ]\!]_{    \mu_{S_{0,i}}    } =[\![pt ( (IF)^i  ,\mathbb{P}(\phi  )  ) ]\!]_{    \downarrow_{wp(i)}( \mu)    } $$

$$ =  [\![pt ( (   IF)^i  ,\mathbb{P}(\phi  )  ) / (   wp(i)     ) ]\!]_{    \mu     }  $$

At this stage we know $  [\![   \mathbb{P}(\phi  ) ]\!]_{ [\![WL ]\!] \mu }   = $

$$  \mu(S_{0,\infty})  [\![   \mathbb{P}(\phi  ) ]\!]_{ [\![WL  ]\!] (   \mu_{S_{0,\infty }}    ) } +  \displaystyle\sum_{i = 0}^{\infty} \mu(S_{0,i}) [\![   \mathbb{P}(\phi  ) ]\!]_{  [\![ WL ]\!] \mu_{S_{0,i}}  }  $$

in which $ \displaystyle\sum_{i = 0}^{\infty} \mu(S_{0,i}) [\![   \mathbb{P}(\phi  ) ]\!]_{  [\![ WL  ]\!] \mu_{S_{0,i}}  } = $

$ [\![  \displaystyle\sum_{i = 0}^{\infty} (  \mathbb{P}(    wp(i)    )      ( pt ( (  IF)^i  ,\mathbb{P}(\phi  )  ) / (    wp(i)   )) )  ]\!]_{\mu } $

Note that $SUM$ is short for $$\displaystyle\sum_{i = 0}^{\infty} (  \mathbb{P}(    wp(i)    )      ( pt ( (  IF)^i  ,\mathbb{P}(\phi  )  ) / (    wp(i)   )) ).$$ 

Then $  [\![   \mathbb{P}(\phi  ) ]\!]_{ [\![WL ]\!] \mu }   = $
$$  \mu(S_{0,\infty})  [\![   \mathbb{P}(\phi  ) ]\!]_{ [\![WL  ]\!] (   \mu_{S_{0,\infty }}    ) } + [\![ SUM ]\!]_{\mu} $$
$$=  [\![    \mathbb{P}(wp(\infty)  )    ]\!]_{\mu} [\![   \mathbb{P}(\phi  ) ]\!]_{ [\![WL  ]\!] (   \mu_{S_{0,\infty }}    ) } + [\![ SUM ]\!]_{\mu}  .$$

\ \\

It remains to study $[\![ \mathbb{P}(\phi  ) ]\!]_{  [\![ WL ]\!] \mu_{S_{0,\infty}}  }$.

Note that $ [\![ WL ]\!]  = [\![ IF;WL ]\!]$.
 
 We then know $[\![ \mathbb{P}(\phi  ) ]\!]_{  [\![WL ]\!] \mu_{S_{0,\infty}}  } $\\
 $= [\![ \mathbb{P}(\phi  ) ]\!]_{  [\![ IF; WL  ]\!] \mu_{S_{0,\infty}}  }$\\
  $= [\![ \mathbb{P}(\phi  ) ]\!]_{  [\![  WL  ]\!] [\![ IF ]\!]    \mu_{S_{0,\infty}}  }  $\\
 $= [\![ \mathbb{P}(\phi  ) ]\!]_{  [\![  WL  ]\!] [\![ \texttt{if}\ B\ \texttt{then}\ C \ \texttt{else} \mbox{ }  \texttt{skip}   ]\!]    \mu_{S_{0,\infty}}  }  $\\
  $= [\![ \mathbb{P}(\phi  ) ]\!]_{  [\![  WL  ]\!] [\![   C     ]\!]    \mu_{S_{0,\infty}}  }  $.

 Here we also note that $[\![   C     ]\!]    \mu_{S_{0,\infty}}= [\![   C     ]\!]\downarrow_{wp(\infty)}   ( \mu )$.

Without loss of generality,  let $sp ( [\![   C     ]\!]  \mu_{S_{0,\infty}} ) = \{S_{1,\infty}, S_{1,0},S_{1,1},\ldots,  \} $, in which  $S_{1,i} \models wp(i)  $ and $S_{1,\infty} \models  wp(\infty)$.

Then by repeating the reasoning on the cases of  $S_{0,i}$, we know \\

  $  [\![   \mathbb{P}(\phi  ) ]\!]_{ [\![ WL  ]\!]   [\![   C     ]\!]    \mu_{S_{0,\infty}}   }   = $

$$  [\![    \mathbb{P}(wp(\infty)  )    ]\!]_{[\![   C     ]\!]    \mu_{S_{0,\infty}} } [\![   \mathbb{P}(\phi  ) ]\!]_{ [\![WL  ]\!] (   \mu_{S_{1,\infty }}    ) } + [\![ SUM ]\!]_{[\![   C     ]\!]    \mu_{S_{0,\infty}} }  =$$

$$  [\![    \mathbb{P}(wp(\infty)  )    ]\!]_{[\![   C     ]\!]  \downarrow_{wp(\infty)}  (\mu)  } [\![   \mathbb{P}(\phi  ) ]\!]_{ [\![WL  ]\!] (   \mu_{S_{1,\infty }}    ) } + [\![ SUM ]\!]_{[\![   C     ]\!]    \downarrow_{wp(\infty)} (\mu ) }  $$

By induction hypothesis we know 
$$  [\![ \mathbb{P}(wp(\infty)  )    ]\!]_{[\![   C     ]\!]  \downarrow_{wp(\infty)}  (\mu)  }  = [\![pt(C, \mathbb{P}(wp(\infty)  ))    ]\!]_{   \downarrow_{wp(\infty)}  (\mu)  }$$
$$=  [\![pt(C, \mathbb{P}(wp(\infty)  ))  /wp(\infty)  ]\!]_{     \mu  } $$

and $[\![ SUM ]\!]_{[\![   C     ]\!]    \downarrow_{wp(\infty)} (\mu ) } =  [\![pt(C,  SUM)  /wp(\infty)  ]\!]_{     \mu  } $.

\ \\

It then remains to study $[\![ \mathbb{P}(\phi  ) ]\!]_{  [\![ WL ]\!] \mu_{S_{1,\infty}}  }$.

 By repeating the reasoning on the cases of  $S_{0,i}$, we know \\

  $  [\![   \mathbb{P}(\phi  ) ]\!]_{ [\![ WL  ]\!]   [\![   C     ]\!]    \mu_{S_{1,\infty}}   }   = $
  
  $$  [\![    \mathbb{P}(wp(\infty)  )    ]\!]_{[\![   C     ]\!]    \mu_{S_{1,\infty}} } [\![   \mathbb{P}(\phi  ) ]\!]_{ [\![WL  ]\!] (   \mu_{S_{2,\infty }}    ) } + [\![ SUM ]\!]_{[\![   C     ]\!]    \mu_{S_{1,\infty}} }  $$
  
  in which  $ \mu_{S_{2,\infty }} = \downarrow_{wp(\infty)} ([\![   C     ]\!] ( \mu_{S_{1,\infty }} ) ) $.
  
\ \\
Let $ \mu_{S_{i+1,\infty }} = \downarrow_{wp(\infty)} ([\![   C     ]\!] ( \mu_{S_{i,\infty }} ) ) $.

Repeat the above procedure   to infinity we get $ [\![  pt(WL, \mathbb{P}(\phi  ) )  ]\!]_{\mu }=[\![   \mathbb{P}(\phi   )  ]\!]_{ [\![   WL  ]\!] (\mu) } $

\end{enumerate}

\end{proof}

The weakest preterm calculus and the above characterization lemma of weakest preterms are the main contribution of this article. With these notions and results at hand, we can proceed to define the weakest precondition of probabilistic assertion and the proof system of PHL with probabilistic assertion straightforwardly. 

\begin{definition}[Weakest precondition of probabilistic assertion]\label{def.weakestpreconditionprobabilistic}

\ \\

\begin{enumerate}
\item $WP(C, r_1 \mbox{ }  rop \mbox{ } r_2  ) = pt(C, r_1) \mbox{ }  rop \mbox{ }  pt(C, r_2)$

\item $  WP(C, \neg \Phi) = \neg WP(C, \Phi) $

\item $WP(C,\Phi_1 \wedge \Phi_2) = WP(C,\Phi_1) \wedge WP(C,\Phi_2) $

\end{enumerate}

\end{definition}

\begin{theorem}\label{wp for PA}
 
   $\mu \models^I WP(C, \Phi)$ iff $ [\![C]\!]\mu \models^I \Phi$.

\end{theorem}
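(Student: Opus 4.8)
The plan is to prove both directions simultaneously by structural induction on the probabilistic formula $\Phi$, reducing the atomic case to the preterm characterization Lemma~\ref{preterm lemma} and handling the Boolean connectives by unfolding the compositional definitions of $WP$ (Definition~\ref{def.weakestpreconditionprobabilistic}) and of $\models^I$ on probabilistic states (Definition~\ref{def.semanticsprobabilisticformulas}).

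\textbf{Base case.} Let $\Phi = (r_1\ rop\ r_2)$. By Definition~\ref{def.weakestpreconditionprobabilistic} we have $WP(C,\Phi) = pt(C,r_1)\ rop\ pt(C,r_2)$, so by Definition~\ref{def.semanticsprobabilisticformulas}, $\mu \models^I WP(C,\Phi)$ holds iff $[\![pt(C,r_1)]\!]^I_\mu\ rop\ [\![pt(C,r_2)]\!]^I_\mu = \top$. Applying Lemma~\ref{preterm lemma} to $r_1$ and to $r_2$ rewrites the two sides as $[\![r_1]\!]^I_{[\![C]\!]\mu}$ and $[\![r_2]\!]^I_{[\![C]\!]\mu}$, and the condition $[\![r_1]\!]^I_{[\![C]\!]\mu}\ rop\ [\![r_2]\!]^I_{[\![C]\!]\mu} = \top$ is, again by Definition~\ref{def.semanticsprobabilisticformulas}, exactly $[\![C]\!]\mu \models^I \Phi$. \textbf{Inductive cases.} For $\Phi = \neg\Psi$: $\mu \models^I WP(C,\neg\Psi)$ iff $\mu \models^I \neg WP(C,\Psi)$ iff $\mu \not\models^I WP(C,\Psi)$ iff, by the induction hypothesis, $[\![C]\!]\mu \not\models^I \Psi$ iff $[\![C]\!]\mu \models^I \neg\Psi$. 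For $\Phi = \Phi_1 \wedge \Phi_2$: $\mu \models^I WP(C,\Phi_1\wedge\Phi_2)$ iff $\mu \models^I WP(C,\Phi_1)$ and $\mu \models^I WP(C,\Phi_2)$ iff, by the induction hypothesis applied to each conjunct, $[\![C]\!]\mu \models^I \Phi_1$ and $[\![C]\!]\mu \models^I \Phi_2$ iff $[\![C]\!]\mu \models^I \Phi_1\wedge\Phi_2$. Since probabilistic formulas are generated from atoms $r_1\ rop\ r_2$ by $\neg$ and $\wedge$ only, these cases are exhaustive.

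I do not expect a genuine obstacle in this theorem itself: all the difficulty has already been absorbed into Lemma~\ref{preterm lemma}, in particular the delicate definition of $pt(\texttt{while}\ B\ \texttt{do}\ C,\ \mathbb{P}(\phi))$ and its verification, which relies on Lemma~\ref{key lemma}. The only points needing a line of care are that Lemma~\ref{preterm lemma} is stated for \emph{arbitrary} real expressions, so it applies uniformly to the compound terms that may occur inside $r_1$ and $r_2$, and that the quantitative reading of $rop$ is literally the same on both sides of the equivalence. Once Theorem~\ref{wp for PA} is in place, relative completeness of the probabilistic proof system follows exactly as in Corollary~\ref{th.completeness}: from $\models \{\Phi\} C \{\Psi\}$ one infers $\models \Phi \rightarrow WP(C,\Psi)$, derives $\vdash \{WP(C,\Psi)\} C \{\Psi\}$, and concludes with the consequence rule.
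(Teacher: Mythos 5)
Your proposal is correct and follows essentially the same route as the paper: structural induction on $\Phi$, with the atomic case $r_1\ rop\ r_2$ discharged by Lemma~\ref{preterm lemma} applied to both real expressions, and the $\neg$ and $\wedge$ cases handled by unfolding Definition~\ref{def.weakestpreconditionprobabilistic} and Definition~\ref{def.semanticsprobabilisticformulas}. Your added remarks about where the real difficulty lives (in the weakest preterm calculus) and how the theorem feeds into relative completeness match the paper's usage as well.
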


\begin{proof}
We prove by structural induction:

\begin{enumerate}
\item   $\mu \models^I WP(C, r_1  \mbox{ }  rop \mbox{ }  r_2)$ iff  $\mu \models^I pt(C, r_1) \mbox{ }  rop \mbox{ }  pt(C, r_2)$ iff \\
 $[\![ pt(C, r_1)  ]\!]^I_{  \mu }  \mbox{ }  rop \mbox{ }   [\![  pt(C, r_2) ]\!]^I_{   \mu }  = \top$ iff\\
 $[\![  r_1  ]\!]^I_{ [\![ C]\!] \mu }   \mbox{ }  rop \mbox{ }    [\![  r_2  ]\!]_{ [\![ C]\!]^I \mu }  = \top$ iff $ [\![ C]\!] \mu \models^I r_1 \mbox{ }  rop \mbox{ } r_2$.

\item $\mu \models^I  WP(C, \neg \Phi)$ iff $ \mu \models^I \neg WP(C, \Phi) $ iff $ \mu   \not\models^I   WP(C, \Phi) $ iff $[\![C]\!]\mu \not\models^I \Phi$ iff $[\![C]\!]\mu  \models^I \neg \Phi$.

\item $\mu \models^I WP(\Phi_1 \wedge \Phi_2)$ iff  $\mu \models^I WP(C,\Phi_1) \wedge WP(C,\Phi_2)$ iff $\mu \models^I WP(C,\Phi_1)  $ and $\mu \models^I   WP(C,\Phi_2)$ iff $[\![C]\!]\mu  \models^I   \Phi_1$ and $[\![C]\!]\mu  \models^I   \Phi_2$ iff $[\![C]\!]\mu  \models^I   \Phi_1 \wedge \Phi_2$.

\end{enumerate}

\end{proof}

The last theorem suggests we define a proof system of PHL with probabilistic formulas in a uniform manner: $\vdash \{WP(C, \Phi)\} C \{\Phi\}$ for every command $C$ and probabilistic formula $\Phi$. We will follow the suggestion with some minor modifications.

\begin{definition}[Proof system of probabilistic formula PHL]    The proof system of  PHL with probabilistic assertions   consists of the following inference rules:

 \begin{center}
        \begin{tabular}{ll}
         $SKIP:$ & $\frac{}{\vdash\{  \Phi    \}\texttt{skip}\{  \Phi  \}}$\\
         $AS:$ & $ \frac{}{\vdash \{      WP(X \leftarrow E, \mbox{ }\Phi)   \}  X \leftarrow E \{  \Phi \} }$\\
         $PAS:$ & $   \frac{}{\vdash \{  WP(X\xleftarrow{\$}\{a_1: r_{1},...,a_n: r_{n} \},\mbox{ } \Phi)    \}  X\xleftarrow{\$}\{a_1: r_{1},...,a_n: r_{n} \}   \{    \Phi    \} }$\\  
         \\
         $SEQ:$ & $ \frac{\vdash\{\Phi\}C_{1}\{\Phi_{1}\}\quad \vdash\{\Phi_{1}\}C_{2}\{\Phi_{2}\}} {\vdash\{\Phi\}C_{1};C_{2}\{\Phi_{2}\}} $\\
     
           $IF:$ & $\frac{}{\vdash\{  WP( \texttt{if}\ B\ \texttt{then}\ C_{1}\ \texttt{else}\ C_{2}  , \mbox{ } \Phi)    \}  \texttt{if}\ B\ \texttt{then}\ C_{1}\ \texttt{else}\ C_{2}   \{  \Phi \}}$\\
 
 $WHILE:$ & $\frac{}{\vdash\{   WP( \texttt{while} \mbox{ }B \mbox{ } \texttt{do} \mbox{ }C , \mbox{ } \Phi)   \}  \texttt{while} \mbox{ }B \mbox{ } \texttt{do} \mbox{ }C   \{ \Phi \}}$,\\
                 
                 \\

          $CONS:$ & $ \frac{\models\Phi'\rightarrow\Phi\quad  \vdash\{\Phi\}C\{\Psi\}\quad \models\Psi\rightarrow\Psi'}  {\vdash\{\Phi'\}C\{\Psi'\}}$\\

        \end{tabular}
    \end{center}

\end{definition}

With the rule (CONS) in our proof system, we can treat $WP(C,\Phi)$ as a semantic notion: if $\models WP(C,\Phi) \leftrightarrow \Psi$, then $\Psi$ is conceived as the weakest precondition of $\Phi$ with the command $C$.

\begin{example}

$\vdash \{\top\}  \texttt{while} \mbox{ } \top \mbox{ } \texttt{do} \mbox{ } \texttt{skip}  \{ \mathbb{P}(\top  ) =0  \}$ is derivable in our proof system. This is because $pt(  \texttt{while} \mbox{ } \top \mbox{ } \texttt{do} \mbox{ } \texttt{skip} , \mathbb{P}(\top  )      )=0$ and $pt(  \texttt{while} \mbox{ } \top \mbox{ } \texttt{do} \mbox{ } \texttt{skip} , 0 )=0$, which implies that \\
$  WP(\texttt{while} \mbox{ } \top \mbox{ } \texttt{do} \mbox{ } \texttt{skip} , \mathbb{P}(\top  ) =0 ) =  (  0=0) $. By the (While) rule we derive $\vdash \{  0=0 \}  \texttt{while} \mbox{ } \top \mbox{ } \texttt{do} \mbox{ } \texttt{skip}  \{ \mathbb{P}(\top  ) =0  \}$, then by the (CONS) rule we derive $\vdash \{\top\}  \texttt{while} \mbox{ } \top \mbox{ } \texttt{do} \mbox{ } \texttt{skip}  \{ \mathbb{P}(\top  ) =0  \}$.

\end{example}

\begin{example}

 Let $C$ be the following program: 
$$X \leftarrow \{\frac{1}{3} :0, \frac{2}{3} :1\}  ; \texttt{if}\ X = 0 \ \texttt{then}\  (\texttt{while} \mbox{ } \top \mbox{ } \texttt{do} \mbox{ } \texttt{skip})   \ \texttt{else}\ \texttt{skip}   $$

$\vdash \{\top\}  C  \{ \mathbb{P}( \top  )  \leq  \frac{2}{3}  \}$ is derivable in our proof system. This is because $pt(C, \mathbb{P}( \top  )) = \frac{2}{3} \mathbb{P}( \top  )$.  Then we know $WP(C, \mathbb{P}( \top  ) \leq \frac{2}{3}) =  \frac{2}{3} \mathbb{P}( \top  ) \leq  \frac{2}{3} $. Then by applying rules in our proof system we get $\vdash \{   \frac{2}{3} \mathbb{P}( \top  ) \leq  \frac{2}{3}      \}  C  \{ \mathbb{P}( \top  )  \leq  \frac{2}{3}  \}$.
Note that $\models    \top \rightarrow ( \frac{2}{3} \mathbb{P}( \top  ) \leq  \frac{2}{3}  ) $. We then use (CONS) to derive $\vdash \{\top\}  C  \{ \mathbb{P}( \top  )  \leq  \frac{2}{3}  \}$.

\end{example}

\begin{lemma}\label{wp condition prob}
If $\models  \{ \Phi \} C \{ \Psi \} $, then $\models \Phi \rightarrow WP(C, \Psi)$.
\end{lemma}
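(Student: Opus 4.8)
The plan is to obtain Lemma~\ref{wp condition prob} as an immediate corollary of Theorem~\ref{wp for PA}; no structural induction or case analysis on $C$ is needed here, since all of that labor is already packaged into Theorem~\ref{wp for PA} and the preterm characterization Lemma~\ref{preterm lemma}. Recall that $\models\{\Phi\}C\{\Psi\}$ unfolds to: for every interpretation $I$ and every probabilistic state $\mu$, if $\mu\models^I\Phi$ then $[\![C]\!]\mu\models^I\Psi$; and $\models\Phi\rightarrow WP(C,\Psi)$ unfolds (via Definition~\ref{def.semanticsprobabilisticformulas}, reading $\rightarrow$ in the standard way) to: for every such $I$ and $\mu$, if $\mu\models^I\Phi$ then $\mu\models^I WP(C,\Psi)$.

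So first I would fix an arbitrary interpretation $I$ and an arbitrary probabilistic state $\mu$ with $\mu\models^I\Phi$; it then remains to show $\mu\models^I WP(C,\Psi)$. From $\models\{\Phi\}C\{\Psi\}$ applied to this $I$ and $\mu$, together with $\mu\models^I\Phi$, I get $[\![C]\!]\mu\models^I\Psi$. Then I invoke the ``$\Leftarrow$'' direction of Theorem~\ref{wp for PA} with the same $I$ and $\mu$: from $[\![C]\!]\mu\models^I\Psi$ it follows that $\mu\models^I WP(C,\Psi)$, which is exactly the required conclusion. Since $I$ and $\mu$ were arbitrary, $\models\Phi\rightarrow WP(C,\Psi)$, completing the proof.

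I expect no real obstacle in this step; the only points requiring a little care are purely bookkeeping. One should make sure the semantics of implication between probabilistic formulas is the pointwise one induced by Definition~\ref{def.semanticsprobabilisticformulas}, so that the unfolding above is legitimate, and one should note that---unlike the deterministic analogue Proposition~\ref{weakest precondition}, whose while-loop case had to track the number of loop iterations and the termination behaviour---here no termination hypothesis on $C$ is required: $[\![C]\!]\mu$ is always a well-defined sub-distribution, so Theorem~\ref{wp for PA} applies uniformly for every command, including \texttt{while}, and the argument goes through verbatim. In the subsequent completeness proof this lemma will be combined with $\vdash\{WP(C,\Psi)\}C\{\Psi\}$ and the rule $(CONS)$ in the by-now-familiar way.
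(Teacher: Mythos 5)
Your proposal is correct and matches the paper's own proof: fix $\mu$ (and $I$) satisfying $\Phi$, use validity of the triple to get $[\![C]\!]\mu\models^I\Psi$, and conclude $\mu\models^I WP(C,\Psi)$ by the backward direction of Theorem~\ref{wp for PA}. The extra bookkeeping you mention (pointwise semantics of implication, no termination hypothesis needed) is fine but not a departure from the paper's argument.
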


\begin{proof}
Let $\mu$ be a probabilistic state such that $\mu \models \Phi$. Then $ [\![ C]\!] \mu \models \Psi $. Then by Theorem \ref{wp for PA} we know $\mu \models WP(C,\Psi)$.
\end{proof}

\begin{lemma}
   For an arbitrary command $C$ and an arbitrary probabilistic formula $\Phi$, it holds that $\vdash \{WP(C,\Phi)\}C\{\Phi\}$.
\end{lemma}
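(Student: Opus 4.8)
The plan is to prove the statement by structural induction on the command $C$, keeping the postcondition $\Phi$ universally quantified throughout so that the induction hypothesis is available for every probabilistic formula. The reason the argument is short is that the proof system of probabilistic-formula PHL has been designed precisely so that the axioms $(AS)$, $(PAS)$, $(IF)$ and $(WHILE)$ already carry $WP(C,\Phi)$ in their precondition slots. Hence for $C$ of the form $X\leftarrow E$, $X\xleftarrow{\$}\{a_1{:}k_1,\dots,a_n{:}k_n\}$, $\texttt{if}\ B\ \texttt{then}\ C_1\ \texttt{else}\ C_2$, or $\texttt{while}\ B\ \texttt{do}\ C_0$, the desired judgement $\vdash\{WP(C,\Phi)\}C\{\Phi\}$ is literally an instance of the corresponding axiom, with no appeal to the induction hypothesis needed. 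In particular, the delicate infinite-sum definition of $pt(\texttt{while}\ B\ \texttt{do}\ C_0,\mathbb{P}(\phi))$ from Definition~\ref{def.weakestpreterm} does no work in this lemma: it has already been absorbed into the $(WHILE)$ axiom.

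For the $\texttt{skip}$ case I would first observe, by a trivial induction on the structure of a real expression $r$ using clauses (1)--(4) of Definition~\ref{def.weakestpreterm}, that $pt(\texttt{skip},r)=r$; propagating this through Definition~\ref{def.weakestpreconditionprobabilistic} gives $WP(\texttt{skip},\Phi)=\Phi$. Then $\vdash\{WP(\texttt{skip},\Phi)\}\texttt{skip}\{\Phi\}$ is an instance of the $(SKIP)$ axiom (or, to be safe, one notes $\models WP(\texttt{skip},\Phi)\leftrightarrow\Phi$ and closes with $(CONS)$).

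The only genuinely inductive case is sequential composition. Here I would use the identity $WP(C_1;C_2,\Phi)=WP(C_1,WP(C_2,\Phi))$. This holds because clauses (3) and (7) of Definition~\ref{def.weakestpreterm} give $pt(C_1;C_2,r)=pt(C_1,pt(C_2,r))$ for every real expression $r$, while $WP(C,\cdot)$ acts homomorphically on the Boolean structure of a formula ($WP(C,r_1\ rop\ r_2)=pt(C,r_1)\ rop\ pt(C,r_2)$, $WP(C,\neg\Phi)=\neg WP(C,\Phi)$, $WP(C,\Phi_1\wedge\Phi_2)=WP(C,\Phi_1)\wedge WP(C,\Phi_2)$); unfolding both sides on each atom of $\Phi$ shows they coincide, up to at most a provable equivalence so that $(CONS)$ absorbs any residual syntactic mismatch. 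The induction hypothesis applied to $C_2$ with postcondition $\Phi$ yields $\vdash\{WP(C_2,\Phi)\}C_2\{\Phi\}$, and applied to $C_1$ with postcondition $WP(C_2,\Phi)$ yields $\vdash\{WP(C_1,WP(C_2,\Phi))\}C_1\{WP(C_2,\Phi)\}$. Feeding these two judgements into the $(SEQ)$ rule and rewriting the precondition via the identity above gives $\vdash\{WP(C_1;C_2,\Phi)\}C_1;C_2\{\Phi\}$.

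There is no deep obstacle remaining: all the difficulty has already been discharged in Definition~\ref{def.weakestpreterm} (finding the right $pt$ for the while loop) and in Lemma~\ref{preterm lemma} together with Theorem~\ref{wp for PA} (establishing its correctness). The two places that need a moment's care are, first, verifying the commutation $WP(C_1;C_2,\Phi)=WP(C_1,WP(C_2,\Phi))$ cleanly, since $WP$ on formulas and $pt$ on real expressions are defined by separate recursions that must be interleaved, and second, keeping track of when an equality of formulas is literal and when it is only semantic, so that an appeal to $(CONS)$ is inserted where needed. Finally, I would remark that once this lemma is available, relative completeness of probabilistic-formula PHL follows at once: given $\models\{\Phi\}C\{\Psi\}$, Lemma~\ref{wp condition prob} yields $\models\Phi\rightarrow WP(C,\Psi)$, this lemma yields $\vdash\{WP(C,\Psi)\}C\{\Psi\}$, and $(CONS)$ combines them into $\vdash\{\Phi\}C\{\Psi\}$.
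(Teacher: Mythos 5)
Your proposal is correct and follows essentially the same route as the paper: structural induction on $C$, observing that the $(SKIP)$, $(AS)$, $(PAS)$, $(IF)$ and $(WHILE)$ axioms already yield $\vdash\{WP(C,\Phi)\}C\{\Phi\}$ directly, with sequential composition as the only inductive case, handled via $WP(C_1;C_2,\Phi)=WP(C_1,WP(C_2,\Phi))$ and the $(SEQ)$ rule. Your extra care in justifying that identity from $pt(C_1;C_2,r)=pt(C_1,pt(C_2,r))$ and the homomorphic clauses of $WP$, and in checking $WP(\texttt{skip},\Phi)=\Phi$, only spells out details the paper leaves implicit.
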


\begin{proof}

We prove it by induction on the structure of command $C$.
When $C$ is \texttt{skip}, assignment, random assignment, IF or WHILE command, we can directly derive them from the proof system PHL. The only case that needs to show is the sequential command, which means that we need to prove $\vdash \{WP(C_1;C_2,\Phi)\}C_1;C_2\{\Phi\}$.

By Definition~\ref{def.weakestpreconditionprobabilistic}, we have $WP(C_1;C_2,\Phi)=WP(C_1,WP(C_2,\Phi))$. By induction hypothesis,  it holds that $\vdash\{WP(C_2,\Phi)\}C_2\{\Phi\}$ and $\vdash\{WP(C_1,WP(C_2,\Phi))\}C_1\{WP(C_2;\Phi)\}$. By inference rule $SEQ$, we can conclude that $\vdash\{WP(C_1,WP(C_2,\Phi))\}C_1;C_2\{\Phi\}$ which implies that $\vdash \{WP(C_1;C_2,\Phi)\}C_1;C_2\{\Phi\}$.

\end{proof}

\vspace{-2mm}

\vspace{-2mm}
\begin{theorem}

The proof system of  PHL with probabilistic assertions is sound and complete.

\end{theorem}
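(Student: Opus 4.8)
The plan is to obtain both halves as short corollaries of the weakest-precondition apparatus already established, so that almost no new work is needed beyond careful bookkeeping. For \textbf{soundness} I would argue by induction on the length of derivations that $\vdash\{\Phi\}C\{\Psi\}$ implies $\models\{\Phi\}C\{\Psi\}$, checking each rule. The axiom $(SKIP)$ is immediate from $[\![\texttt{skip}]\!]\mu=\mu$. The axioms $(AS)$, $(PAS)$, $(IF)$ and $(WHILE)$ all have the shape $\vdash\{WP(C,\Phi)\}C\{\Phi\}$, so their validity is precisely the ``$\Rightarrow$'' direction of Theorem~\ref{wp for PA}: whenever $\mu\models^I WP(C,\Phi)$ we get $[\![C]\!]\mu\models^I\Phi$. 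For $(SEQ)$, from the induction hypotheses $\models\{\Phi\}C_1\{\Phi_1\}$ and $\models\{\Phi_1\}C_2\{\Phi_2\}$ together with $\mu\models^I\Phi$ one obtains $[\![C_1]\!]\mu\models^I\Phi_1$, hence $[\![C_2]\!]([\![C_1]\!]\mu)=[\![C_1;C_2]\!]\mu\models^I\Phi_2$, using the sequential-composition clause of Definition~\ref{def.semanticsofcommand}. For $(CONS)$, I would unfold the semantics of implication between probabilistic formulas (Definition~\ref{def.semanticsprobabilisticformulas}): $\models\Phi'\to\Phi$ says every $\mu$ with $\mu\models^I\Phi'$ has $\mu\models^I\Phi$, and symmetrically for $\Psi\to\Psi'$; chaining these around $\models\{\Phi\}C\{\Psi\}$ yields $\models\{\Phi'\}C\{\Psi'\}$.

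For \textbf{completeness}, assume $\models\{\Phi\}C\{\Psi\}$. By Lemma~\ref{wp condition prob} this gives $\models\Phi\to WP(C,\Psi)$, and the lemma just proved gives $\vdash\{WP(C,\Psi)\}C\{\Psi\}$. Since $\models\Psi\to\Psi$ holds trivially, the rule $(CONS)$ yields $\vdash\{\Phi\}C\{\Psi\}$. I would note explicitly at this point that the finiteness of $\mathbb{PV}$ assumed in Section~\ref{Probabilistic Hoare Logic with deterministic assertion} is being reused: $WP$ on probabilistic formulas reduces through $pt(\cdot,\cdot)$ to the deterministic $wp$, whose expressibility as an honest assertion (via G$\ddot{o}$del's $\beta$ predicate) depends on that assumption.

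The genuine difficulty is not in this final assembly but was already absorbed into Lemma~\ref{preterm lemma} --- the identity $[\![pt(C,r)]\!]^I_{\mu}=[\![r]\!]^I_{[\![C]\!]\mu}$, and especially its \texttt{while} case --- which in turn leans on Lemma~\ref{key lemma} relating $\texttt{while}\ B\ \texttt{do}\ C$ to iterated $\texttt{if}\ B\ \texttt{then}\ C\ \texttt{else}\ \texttt{skip}$ on states that terminate at a prescribed iteration, together with the delicate bookkeeping of the terms $T_i$ and the operator $f_{C,B}$. Relative to that, the only points here still demanding a moment's care are (i) isolating $(SEQ)$ and $(CONS)$ as the two rules whose soundness is not a direct instance of Theorem~\ref{wp for PA}, and (ii) keeping the $(CONS)$ premises interpreted over \emph{probabilistic} states throughout, so the argument never silently slips into the deterministic semantics of Section~\ref{Probabilistic Hoare Logic with deterministic assertion}.
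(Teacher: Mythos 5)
Your proposal is correct and follows essentially the same route as the paper: soundness of $SKIP$, $AS$, $PAS$, $IF$, $WHILE$ from the forward direction of Theorem~\ref{wp for PA} plus routine checks for $SEQ$ and $CONS$, and completeness by combining Lemma~\ref{wp condition prob} with $\vdash\{WP(C,\Psi)\}C\{\Psi\}$ and the rule $(CONS)$. Your write-up merely spells out the $SEQ$/$CONS$ cases and the reliance on the finiteness of $\mathbb{PV}$ in slightly more detail than the paper does.
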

\vspace{-2mm}
\begin{proof}
Soundness: the soundness of SKIP, AS, PAS, IF, WHILE follows from Theorem \ref{wp for PA}. The soundness of SEQ and CON can be proved by simple deduction.

Completeness: Assume $\models  \{ \Phi \} C \{ \Psi \} $, then $\models \Phi \rightarrow WP(C, \Psi)$ by lemma \ref{wp condition prob}. Then by $\vdash \{WP(C,\Psi) \}C \{ \Psi \} $ and CONS we know $ \vdash  \{\Phi \}C \{ \Psi \} $.

\end{proof}

\vspace{-2mm}
\section{Conclusions and Future Work}\label{Conclusions and Future Work}
Probabilistic Hoare logic is particularly useful in fields like the formal verification of probabilistic programs. Developers and researchers can use it to formally reason about program behavior with stochastic elements and ensure that the desired probabilistic properties are upheld. The studies on Hoare logic can be classified into two approaches: satisfaction-based and expectation-based. The problem of relative completeness of satisfaction-based PHL with While-loop has been unsolved since 1979. This paper addresses this problem by proposing a new PHL system that introduces the command of probabilistic assignment. In comparison with the existing literature on satisfaction-based PHL where probabilistic choice is expressed in either probabilistic choice between two statements or a flipped coin, our construction is expressively equivalent in terms of programming language and, moreover, it brings a lot of convenience in defining the weakest preconditions and preterms as an extension to the normal assignments. This, in turn, facilitates the proof of relative completeness. 

The main contribution of this paper is successfully finding out the weakest preterm of While-loop given a real expression. Definition~\ref{def.weakestpreterm} (9) essentially reveals how a While-loop changes the probabilistic property of computer states, considering both execution branches that halt and infinite runs.  Lemma~\ref{preterm lemma} demonstrates that the weakest preterms defined in our way accurately characterize the `pre-probability' given a command $C$ and a real expression $r$. The appropriate weakest preterm calculus bridges the biggest gap in proving the relative completeness of PHL.


The progress we have made in this paper may shed insights into the research of quantum Hoare logic (QHL). In the expectation-based QHL \cite{Ying11,LiuZWYLLYZ19}, the assertions in the Hoare triples are functions which map states to observables. They are interpreted as real numbers by calculating the traces of some matrix representing observing quantum states. In this approach, it is difficult to handle compounded properties of quantum programs. On the other hand, the assertions in the satisfaction-based approach \cite{ChadhaMS06,Kakutani09,Unruh19,ZhouYY19,DengF22} are logical formulas. This treatment makes it easier to express the properties of computer states, making the formal verification of programs more straightforward. However, the existing satisfaction-based QHL is either incomplete or complete but not expressive enough. Therefore, in the future, we will study how to extend our current work to build satisfaction-based QHL that is both complete and expressively strong.

\newpage

\bibliographystyle{ACM-Reference-Format}
\bibliography{literature}

\appendix

\end{document}